\documentclass[journal]{IEEEtran}
\usepackage{amsmath,amsfonts}
\usepackage{algorithmic}
\usepackage{algorithm}
\usepackage{array}
\usepackage{subfig}
\usepackage{textcomp}
\usepackage{stfloats}
\usepackage{url}
\usepackage{verbatim}
\usepackage{graphicx}
\usepackage{cite}
\hyphenation{op-tical net-works semi-conduc-tor IEEE-Xplore}
% updated with editorial comments 8/9/2021
\usepackage{amsthm}
\usepackage{amsmath}
\usepackage{booktabs}

\usepackage{caption}
%\usepackage{times} % 使用Times字体，如果需要新罗马字体可以改用\usepackage{newtxtext}
%\usepackage{newtxtext}
% 自定义表格标题样式
\DeclareCaptionLabelFormat{tablelabel}{\footnotesize{TABLE \thetable}} % 不加粗显示，包含表格序号
\captionsetup[table]{
	labelsep=newline, % 标题名和标题内容新行显示
	justification=centering, % 标题名居中显示
%	textfont={footnotesize}, % 标题内容为小型大写
	textfont={sc,footnotesize}, % 标题内容为小型大写
	labelfont={normalfont}, % 标题名正常字体
	labelformat=tablelabel, % 使用自定义标签格式
	font=normalsize % 确保标题和正文字体大小一致
}
%\usepackage{subcaption}
%\usepackage[backend=bibtex]{biblatex}
%\bibliography{Paper2}

\newtheorem{proposition}{Proposition}

% 定义备注环境，style 设为remark
\theoremstyle{remark}
\newtheorem{remark}{Remark}

\usepackage{enumitem}
\begin{document}

%\title{A Sample Article Using IEEEtran.cls\\ for IEEE Journals and Transactions}
%\title{A Flexible and Reliable Opteration of Electricity-HCNG Networks with Adjustable Hydrogen Fraction: A Distributionally Robust Joint Chance-Constrained Approach}
\title{Flexible Opteration of Electricity-HCNG Networks with Adjustable Hydrogen Fraction via Distributionally Robust Method}
\title{Distributionally Robust Chance-Constrained Optimal Dispatch of Electricty-HCNG Systems with Variable Hydrogen Fraction}

%and Reliable
\title{Flexible Operation of Electricity-HCNG Networks with
	Variable Hydrogen Fraction: A Distributionally Robust Chance-Constrained Approach}

\title{Flexible Operation of Electricity-HCNG Networks with
	Variable Hydrogen Fraction: A Joint Distributionally Robust Chance-Constrained Method}
	
\title{Flexible Operation of Electricity-HCNG Networks with
	Variable Hydrogen Fraction: A Distributionally Robust Joint Chance-Constrained Approach}
	
%\title{Flexible Operation of Electricity-HCNG Networks: An Improved Distributionally Robust Joint Chance-Constrained Approach}

%\title{Flexible Operation of Electricity-HCNG Networks with
%	Variable Hydrogen Fraction: An Improved Decompsition of DRJCC Approach}
	
%\title{Flexible Operation of Electricity-HCNG Networks with
%	Variable Hydrogen Fraction: An Improved Decompsition of DRJCC Approach}
	
%\title{Flexible Operation of Electricity-HCNG Networks with
%	Variable Hydrogen Fraction: An Improved Decompsition of DRJCC Approach}

%\title{Improved Decompsition for Distributionally Robust Joint Chance-Constrained Electricity-HCNG Network Operation with
%	Variable Hydrogen Fraction}
	
%\title{Flexible Operation of Electricity-HCNG Networks with Variable Hydrogen Fraction against the Fluctuations and Volatilities of Renewable Energy}

%\author{IEEE Publication Technology,~\IEEEmembership{Staff,~IEEE,}
        % <-this % stops a space
        
\author{
	\IEEEauthorblockN{Sicheng Liu,~\IEEEmembership{Student Member,~IEEE,}}
	\IEEEauthorblockN{Bo Yang,~\IEEEmembership{Senior Member,~IEEE,}}
	\IEEEauthorblockN{Xu Yang,~\IEEEmembership{Student Member,~IEEE,}}
	\IEEEauthorblockN{Xin Li}
	\IEEEauthorblockN{Zhaojian Wang,~\IEEEmembership{Member,~IEEE,}}
	\IEEEauthorblockN{Xinping Guan,~\IEEEmembership{Fellow,~IEEE}}
	% <-this % stops a space
	%\thanks{This paper was produced by the IEEE Publication Technology Group. They are in Piscataway, NJ.}% <-this % stops a space
	%\thanks{Manuscript received April 19, 2021; revised August 16, 2021.}
}
%\thanks{This paper was produced by the IEEE Publication Technology Group. They are in Piscataway, NJ.}% <-this % stops a space
%\thanks{Manuscript received April 19, 2021; revised August 16, 2021.}}

% The paper headers
\markboth{Journal of \LaTeX\ Class Files,~Vol.~14, No.~8, August~2021}%
{Shell \MakeLowercase{\textit{et al.}}: A Sample Article Using IEEEtran.cls for IEEE Journals}

%\IEEEpubid{0000--0000/00\$00.00~\copyright~2021 IEEE}
% Remember, if you use this you must call \IEEEpubidadjcol in the second
% column for its text to clear the IEEEpubid mark.

\maketitle

\begin{abstract}
Hydrogen-enriched compressed natural gas (HCNG) is a promising way to utilize surplus renewable energy through hydrogen electrolysis and blending it into natural gas.
However, the optimal hydrogen volume fraction (HVF) of HCNG varies following the daily fluctuations of renewable energy.
%Besides, facing the rapid volatility of renewable energy, while electricity-HCNG (E-HCNG) coupling networks are expected to enhance operations against volatility,
%ensuring rapid and reliable real-time adjustments remains challenging.
Besides, facing the rapid volatility of renewable energy, 
ensuring rapid and reliable real-time adjustments is challenging for electricity-HCNG (E-HCNG) coupling networks.
To this end, this paper proposes a flexible operation framework for electricity-HCNG (E-HCNG) networks against the fluctuations and volatility of renewable energy.
Based on operations with variable HVF, the framework developed an E-HCNG system-level affine policy, which allows real-time re-dispatch of operations according to the volatility.
Meanwhile, to guarantee the operational reliability of the affine policy,
a distributionally robust joint chance constraint (DRJCC) is introduced, which limits the violation probability of operational constraints under the uncertainties of renewable energy volatility.
Furthermore, in the solving process, to mitigate the over-conservation in DRJCC decomposition, an improved risk allocation method is proposed, utilizing the correlations among violations under the affine policy. 
Moreover, to tackle the non-convexities arising from the variable HVF, customized approximations for HCNG flow formulations are developed.
The problem is finally reformulated into a mix-integer second-order cone programming problem.
The effectiveness of the proposed method is validated both in small-scale and large-scale experiments.
\end{abstract}

\begin{IEEEkeywords}
Hydrogen enriched compressed natural gas (HCNG), electric-HCNG network, variable hydrogen fraction, distributionally robust joint chance-constraint, affine policy.
\end{IEEEkeywords}

\vspace{-5pt}
\addcontentsline{toc}{section}{Nomenclature}
\section*{Nomenclature}
%\footnotesize
\vspace{-5pt}
\small
\subsection{Indices and Sets}
\begin{IEEEdescription}[\IEEEusemathlabelsep\IEEEsetlabelwidth{$HHV^{Hy},HHV^{Ga}$}] % 设置符号的最大宽度
	\item[$i,j,h$] Index of grid bus.
	\item[$m,n,o$] Index of HCNG network node.
	\item[$ij$] Index of power line.
	
	\item[$mn$] Index of hydrogen pipeline.
	\item[$t$] Index of time period.
	\item[$\Omega _{(\cdot)}$] Set of energy components $(\cdot)$.
	\item[$\Phi^0$] Set of deterministic scheduling decisions.
	\item[$\Phi$] Set of all scheduling decisions containing affine factors.
	\item[$\mathcal{D}$] The ambiguity set of PV output prediction errors.
\end{IEEEdescription}
\vspace{-15pt}
\subsection{Parameters}
\begin{IEEEdescription}[\IEEEusemathlabelsep\IEEEsetlabelwidth{$HHV^{Hy},HHV^{G}$}] % 设置符号的最大宽度
	\item[$a_{i}^{\mathrm{NG}},b_{i}^{\mathrm{NG}},c_{i}^{\mathrm{NG}}$] Generation cost factors of non-GFU $i$. 
	\item[$c_{m}^{S}$] Cost factor of gas source $m$.
	\item[$\beta_{m}^{\mathrm{P2H}}$] Conversion efficiency of P2H $m$.
	\item[$\beta_m^G$] Generation efficiency of GFU $m$.
	\item[$v_{k}^{C}$] Loss factor of HCNG compressor $k$.
	\item[$K_{mn}$] Line pack coefficients of HCNG pipeline $mn$.
	\item[$M^{Hy},M^{Gas}$] Molar value of hydrogen and natural gas.
	\item[$Heat_{m,t}^{\mathrm{L}}$] The heat load of HCNG node $m$ at $t$.
	\item[$HHV^{Hy},HHV^{Gas}$] The heat value of hydrogen and natural gas.
	\item[$P_{i,t}^{\mathrm{PV}}$] Predicted output of PV $i$ at $t$.
    \item[$\boldsymbol{\xi}_t$] The PV output predicted error at $t$.
	\item[$\boldsymbol{e}$] Vector of all ones.
	\item[$\eta _{m}^{ch},\eta _{m}^{dch}$] Charging and discharging efficiency of hydrogen storage $m$.
	\item[$c^{(\cdot),U},c^{(\cdot),D}$] Cost of upward and downward reserve capacity by corresponding component $(\cdot)$.
	\item[$(\cdot)^{\mathrm{max}},(\cdot)^{\mathrm{min}}$] Maximum and minimum bounds for corresponding operation $(\cdot)$ .
	\item[$\epsilon$] Operational violation risk of DRJCC.
\end{IEEEdescription}
\vspace{-10pt}
\subsection{Variables}
\begin{IEEEdescription}[\IEEEusemathlabelsep\IEEEsetlabelwidth{$HHV^{Hy},HHV^{G}$}] % 设置符号的最大宽度
	\item[$F^0$] Cost function of deterministic scheduling.
	\item[$P_{i,t}^{\mathrm{NG}},P_{i,t}^{\mathrm{G}}$] Power outputs of non-GFU $i$ and GFU $i$ at $t$.
	\item[$P_{i,t}^{\mathrm{P2H}}$] Power consumption of P2H $i$ at $t$.
	\item[$P_{i,t}^{\mathrm{L}}$] Power load of bus $i$ at $t$.
	\item[$\mathrm{GSF}_{l,i}$] Generation swift factor from bus $i$ to line $l$.
	\item[$F_{m,t}^{\mathrm{S}}$] Supply volume of natural gas source $m$ at $t$.
	\item[$F_{m,t}^{\mathrm{P2H}}$] Hydrogen production by P2H $m$ at $t$.
	\item[$F_{m,t}^{\mathrm{dch}}$] Hydrogen released from storage $m$ at $t$.
	\item[$E_{m,t}$] Hydrogen amount of hydrogen storage $m$ at $t$.
	\item[$F_{m,t}^{\mathrm{G}}$] HCNG consumption of GFU $m$ at $t$.
	\item[$F_{m,t}^{\mathrm{L}}$] HCNG demand of node $m$ at $t$, influenced by the HVF.
	\item[$F_{k,t}^{C}$] HCNG flow through compressor $k$ at $t$.
	\item[$F_{mm,t}$] HCNG flow of pipeline $mn$ at $t$.
	\item[$F_{mm,t}^{\mathrm{in}},F_{mm,t}^{\mathrm{out}}$] In- and out-HNCG flow of pipeline $mn$ at $t$.
	\item[$C_{mn,t}$] Coefficient for flow-pressure relationship of pipeline $mn$ at $t$, influenced by the HVF.
	\item[$\pi_{m,t}$] HCNG pressure of node $m$ at $t$.
	\item[$\pi_{m,t}^{\mathrm{in}},\pi_{m,t}^{\mathrm{out}}$] In- and out-pressure of compressor $k$ at $t$.s
	\item[$LP_{mn,t}$] Line pack of pipeline $mn$ at $t$.
	\item[$M_t$] Average molar value of HCNG at $t$, influenced by the HVF.
	\item[$HHV_t$] Heat value of HCNG at $t$, influenced by the HVF.
	\item[$w_t$] Hydrogen volume fraction (HVF) of HCNG at $t$.
	\item[$\alpha^{(\cdot)}$] Adjustment factor of corresponding scheduling $(\cdot)$.
	\item[$\tilde{(\cdot)}$] Actual execution volume of corresponding scheduling $(\cdot)$ after adjustment.
	\item[$R_{i,t}^{(\cdot),\mathrm{U}},R_{i,t}^{(\cdot),\mathrm{D}}$]  Upward and downward reserve capacity of corresponding component $(\cdot)$ $i$ at $t$.
%	\item[$R_{i,t}^{\mathrm{G},\mathrm{U}},R_{i,t}^{\mathrm{G},\mathrm{D}}$] Upward and downward reserve capacity of GFU $i$ at $t$.
%	\item[$R_{i,t}^{\mathrm{P2H},\mathrm{U}},R_{i,t}^{\mathrm{P2H},\mathrm{D}}$]Upward and downward reserve capacity of P2H $i$ at $t$. 
%	\item[$R_{m,t}^{\mathrm{S},\mathrm{U}},R_{m,t}^{\mathrm{S},\mathrm{D}}$] Upward and downward reserve capacity of gas source $m$ at $t$. 

\end{IEEEdescription}
\section{Introduction}

%\IEEEPARstart{A}{s} the global energy transition accelerates, hydrogen is gaining increasing attention. 
%Hydrogen can enhance the utilization of renewable energy sources, such as solar and wind power and is expected to serve as a replacement for traditional fossil fuels. 
%Its unique properties can contribute to the decarbonization of the power, chemical, and transportation sectors.
%Consequently, numerous countries are actively pursuing research and development projects related to hydrogen.

\IEEEPARstart{A}{s} the pace of global energy transition accelerates, hydrogen is gaining increasing attention. 
Hydrogen can be produced through electrolysis, thereby enhancing the utilization of renewable energy sources, such as solar and wind power.
In terms of usage, hydrogen is expected to serve as a replacement for traditional fossil fuels, aiming to decarbonize sectors such as transportation, chemicals, and power generation.
%Consequently, numerous countries are actively pursuing research and development projects related to h/ydrogen.
%Therefore, many countries are actively pursuing hydrogen-related research and application programs \cite[].

However, storage and transportation of hydrogen remain costly and pose safety risks \cite{c.2013Economic}. %6 
Moreover, the size of the consumer market for hydrogen is still relatively small \cite{chen2020GuanDaoXianQing1}. %林今
Fortunately, hydrogen-enriched compressed natural gas (HCNG) is recognized as an effective solution to the challenges of transportation, storage, and usage.
By blending hydrogen into natural gas, HCNG enables low-cost transportation, storage, and widespread usage through existing natural gas infrastructures.
So far, many demonstration projects about HCNG have been launched around the world, such as HYREADY \cite{onno2016HYREADY}, Sustainable Ameland \cite{mjHYDROGEN}, GRHYD \cite{GRHYD}, and Chaoyang renewable hydrogen blending \cite{GuoNeiShouLi}. 

%Therefore, hydrogen-enriched natural gas (HCNG) is considered an effective solution to the challenges of hydrogen storage, transportation, and consumption.
%However, storage and transportation of hydrogen faces high costs and security risks. And hydrogen consumers are still relatively small nowadays.
%Therefore, hydrogen-rich natural gas (HCNG) is considered an effective way to solve the storage, transportation and consumption problem of hydrogen.
%It can utilize the original gas pipelines, gas turbines and other facilities, requiring no or very little equipment modification.
%Although many studies evaluated the economics of electricity-HCNG network operations [XX],
 
%While the economic benefits of electric-HCNG (E-HCNG) systems are widely validated \cite{c.2013Economica,christopherj.2018Powertogas},
%their optimal operation is complicated and threatened by the hydrogen blending, compared to conventional multi-energy systems.
%Specifically, (1) the properties of HCNG, such as density and heat value, change with hydrogen blending, which increases the complexity of operational models like HCNG flow and heat load.  
%(2) The coordinate operation between the power grid and the HCNG network is criticized by the hydrogen volume fraction (HVF) of HCNG. 
%The HVF is defined as the ratio of hydrogen volume to total HCNG volume,
%which is hard to determine in operations. 

While the economic benefits of electric-HCNG (E-HCNG) systems are widely validated \cite{c.2013Economica,christopherj.2018Powertogas},
their optimal operation is complicated and threatened by the hydrogen blending, compared to conventional multi-energy systems.
Specifically, the properties of HCNG, such as density and heat value, change with hydrogen blending, which increases the complexity of operational models like HCNG flow and heat load.  
%(2) The coordinate operation between the power grid and the HCNG network is criticized by the hydrogen volume fraction (HVF) of HCNG. 
%The HVF is defined as the ratio of hydrogen volume to total HCNG volume,
%which is hard to determine in operations. 
%
%
%mixed components of gas brings new challenges to the optimal operation comparing with conventional multi-energy networks:
%(1) The properties of the gas (e.g. density and calorific value) change with the proportion of hydrogen. This leads to the variant of energy flow and gas load models, which make the model more complexity.
%(2) Moreover, how to determine the mixed hydrogen fraction is an another but crucial issue in scheduling.

In response to the challenge of complexities, several works investigated the optimal dispatch of E-HCNG networks.
Fu and Lin et al. were pioneers in studying the optimal operation of E-HCNG distribution systems.
They characterized how hydrogen blending affects the gas flow and heat loads \cite{chen2020GuanDaoXianQing1}.
Subsequently, Zhou et al. developed an operational model for E-HCNG transmission networks, considering the gas storage capacities of line packs. They also introduced an iterative algorithm for solutions \cite{yue2023HCNGZuiYouDiaoDuMoXing}.
These studies provide valuable references for modeling E-HCNG systems with a constant hydrogen volume fraction (HVF), which refers to the ratio of hydrogen volume to the total HCNG volume.
%, which standardizes the amount of hydrogen blending. 
%However, for systems with significant fluctuations in source load, the hydrogen blending is limited by the constant HVF, thereby cannot flexibly adjust.
%However, for systems with significant fluctuations in source load, operational flexibility is limited by the constant HVF, meaning that hydrogen blending must remain in a fixed ratio to the natural gas supply.
However, for systems with significant fluctuations in supply and demand, operational flexibility is limited by the constant HVF, meaning that hydrogen blending must remain in a fixed ratio to the natural gas supply.
This limitation of the HVF mode may result in increased operating costs for E-HCNG networks.

%However, the optimal HVF of HCNG may vary, when the renewable energy and load significantly fluctuate throughout the day.
% which is defined as the ratio of hydrogen volume to total HCNG volume.
%However, the optimal HVF of HCNG may vary, when the renewable energy and load significantly fluctuate throughout the day.
%This difference in the HVF mode may result in higher operating costs for E-HCNG networks.
%However, when the relationship between renewable energy and load significantly changes throughout the day, the optimal HVF of the gas may vary. 
%This leads to higher operating costs for E-HCNG networks.

%【疑问！这么写对不对？本意是想写动态HHVF调度的综述。这里的写法偏向写HVF的估计了，但是这篇文章没提出更好的HVF估计方法，只是用了一下最后一篇文章的方法】
%Some researches investigated the dispatch problem under the variable HVF.
%By calculating the gas input ratio at each node,
To achieve optimal operation under variable HVF conditions, existing studies employed various methods to estimate the dynamics of HVF with scheduling operation.
%As for the optimal operation with variable HVF, one of the key challenges is estimating the HVF throughout the network.
Research \cite{isam2021JieDianJiHVFJianMoHuLueline} estimated the mixed gas fraction including HVF at each node by calculating the proportion of gas inflow.
% and assessed the operational economics. 
Similarly, resilience-oriented scheduling for integrated energy systems using HCNG and variable HVF was proposed, which significantly enhances resilience \cite{ran2024KeBianXianQingBiLi}.
%Unfortunately, these studies overlook the buffering effect of the stored gas in line packs on the HVF variations, which impacts the accuracy of HVF estimates and operational efficiency.
Subsequently, another approach introduced dynamic HVF estimation using advective transport differential equations, continuously characterizing HVF variations \cite{sleiman2022YouHVFKuoSanMoXing}.
%Unfortunately, the difference approximation of the differential equations in the solution needs to introduce numerous auxiliary decision variables, resulting in significant computational demands.
Moreover, Jiang et al. proposed a global HVF estimation method, which combined the line pack effect on HCNG concentration. 
And they also constructed an optimal E-HCNG scheduling strategy in day-ahead \cite{yunpeng2024ChongDa}. 
%Despite various methods for estimating HVF, a significant challenge remains: in day-ahead scheduling, the uncertainties of renewable energy volatilities, such as solar and wind, make it difficult to predict accurately.
Despite the various methods for estimating HVF, a significant challenge remains: the rapid volatility of renewable energy output, such as solar and wind, cannot be accurately predicted.
This volatility challenges the energy balance and economic efficiency of E-HCNG network operations with variable HVF.
%这影响了
%the fluctuating nature of renewable energy sources, such as solar and wind, makes accurate day-ahead predictions difficult.
%This challenges the flexibility and reliability of E-HCNG network operations.

%Facing the above challenge, even though the research on E-HCNG network scheduling is still in the early stages, 
%the existing operational studies for integrated energy systems can provide meaningful references. 
%To realize flexible operations in response to the fluctuation of renewable energy,
%a two-stage power-gas scheduling strategy was proposed with affine re-dispatch policy \cite{cheng2019LiangJieDuanDROaffine}. 
%The first stage determines the dispatch baseline, reserve capacity, and the affine response ratio in advance. 
%In the second stage, the dispatch is affinely adjusted based on renewable energy fluctuations, which ensures flexible energy balancing.
%However, hydrogen blending with variable HVF introduces several high-order nonconvexties in the operation model, such as HCNG flow and heat loads. 
%This brings challenges to the affine policy of E-HCNG networks.

%Facing the challenge of renewable energy volatilities, even though the research on E-HCNG scheduling is still in the early stages, 
%the existing studies for integrated energy systems can provide meaningful references. 
Facing the challenge of renewable energy volatility, even though the research on E-HCNG scheduling is still in the early stages, 
the existing studies for integrated energy systems can provide meaningful references. 
%To realize flexible operations in response to the fluctuation of renewable energy,
%a two-stage power-gas scheduling strategy was proposed with affine re-dispatch strategy \cite{cheng2019LiangJieDuanDROaffine}. 
To realize flexible operations in response to the volatility of renewable energy, an affine policy for power-gas scheduling was proposed \cite{cheng2019LiangJieDuanDROaffine}. 
The affine policy enables real-time re-dispatch of operations after observing the renewable energy deviations. 
However, the variable HVF introduces high-degree nonconvexities into the formulation of HCNG flow, load, etc.
These nonconvexities increase the difficulty of constructing affine policies for E-HCNG networks.

%However, hydrogen blending with variable HVF affects the properties of HCNG flow and heat values and introduces several high-order nonconvexities in the operation model.
%This brings challenges to the affine policy of E-HCNG networks.
%然而，可变HVF的引入，给HCNG flow、负载等引入了高次非凸项。这些高次非凸增大了affine policy的难度。

%Moreover, to ensure operational reliability, Yang et al. proposed a distributionally robust chance-constrained (DRCC) scheduling method for the power-gas integrated network with the affine policy \cite{lun2020DianQiOPF}. 
Moreover, to ensure the reliability about the affine policy, 
distributionally robust chance-constrains (DRCCs) were introduced for the scheduling of power-gas integrated system, under the uncertainties of renewable energy forecast errors \cite{lun2020DianQiOPF}.
%Yang et al. proposed a distributionally robust chance-constrained (DRCC) scheduling method for the power-gas integrated network, under the uncertainties of renewable energy forecast errors \cite{lun2020DianQiOPF}.
%
%
%Further, to restrict the re-dispatch volume for operational reliability, Yang et al. proposed a distributionally robust chance-constrained (DRCC) scheduling method for the power-gas integrated network, under the uncertainties of renewable energy forecast errors \cite{lun2020DianQiOPF}.
DRCCs can restrict the violation probability of operational constraints within the desired risk level, even under the worst-case probability distribution of renewable energy uncertainties.
Further, to achieve system-wide reliability, i.e., managing the joint violation probability of all operational constraints, 
distributionally robust joint chance constraint (DRJCC) is introduced \cite{Distributionallyrobustjoint}. 
DRJCC allows operators to set the violation risk at the system level, eliminating the need for complex individual analysis of each constraint.
%However, while there are several DRJCC solution methods, they are not directly applicable to E-HCNG operations with the affine policy.
%
To solve the DRJCC, the averaged and optimized Bonferroni approximation were successively proposed and widely adopted \cite{doi:10.1137/050622328, lun2022YangLun}.
%These methods transform DRJCCs into independent DRCCs by allocating the joint risk to each of the independent conditions in DRJCCs.
These methods allocate the joint risk level across individual operational constraints in DRJCCs, thereby approximately decomposing them into independent DRCCs and making the problem more tractable.
%These methods decomposites the DRJCC into several DRCCs by allocating the joint risk level across individual operational constraints in DRJCCs
%, thereby approximately decomposites them into independent DRCCs and making the problem more tractable.
However, the Bonferroni approximation assumes that the violations of each constraint are probabilistically mutually exclusive \cite{weijun2022LiLun}.
Under the affine policy, the violation probabilities of constraints are highly correlated. 
%In addition, the Bonferroni approximation assumes that the violations of each constraint are probabilistically mutually exclusive \cite{weijun2022LiLun}.
%However, under the affine policy, the violation probabilities of constraints are highly correlated. 
This gap leads to potential over-conservation and reduces operational economic efficiency, especially when there are numerous constraints in DRJCC, like E-HCNG networks.
In summary, although several works contribute to the optimal operations of E-HCNG networks,
considering the fluctuations and volatility of renewable energy,
there are still some gaps as follows:

\begin{enumerate}
	\item {

	\textit{In terms of operating formulation}: 
	While several works studies the operations with variable HVF to follow the daily fluctuations of renewable energy, the rapid volatility threatens the energy balance and economic efficiency of operations, highlighting the need for efficient real-time re-dispatch operation methods.
	}

	\item{
	\textit{In terms of problem solving}: Although the affine policy combined with DRJCC is promising to achieve quick and reliable operations, there are still some technical problems.
	First, existing DRJCC decomposition methods overlook the correlations of violations under the affine policy, which can bring over-conservatism, especially for complicated systems like E-HCNG networks.
	Second, the variable HVF introduces several higher-degree nonconvexities into the formulation, making the problem intractable.
	Therefore, a more efficient and customized solution method is necessary.	
	}
\end{enumerate}

%In summary, facing the threat of renewable energy fluctuations, this paper proposes a flexible operation method for E-HCNG networks to achieve optimality and reliability.
% to the optimality and reliability of E-HCNG operations, this paper proposes a flexible operation method
%a DRJCC-based operation approach with variable HVF and system-level affine policy.
%Moreover, an improved transformation method for DRJCC with the affine policy is proposed.
Facing the above issues, this paper proposes a flexible optimal operation method for E-HCNG networks.
The detailed contributions are as follows:
%说的简单点，面对着XX对带来的什么威胁，本文提出了基于什么技术的调度策略。此外，新的DRJCC转化方法被提出。

\begin{enumerate}
	%为了解决可再生能源波动对E-HCNG的影响，一种灵活调度框架被提出。在充分刻画可变HVF系统运行模型的基础上，定制化地提出了针对可再生能源不确定性的仿射再调度模型。
	\item{
%		This paper proposes a flexible operation framework for E-HCNG networks. The framework not only sufficiently characterizes the effect of variable HVF in operations, but also proposes a system-level affine policy to follow the renewable energy variations. 
%		The proposed framework enables the E-HCNG network to better adapt to renewable energy fluctuation and improve the operational economy.
%		This paper proposes a flexible operation framework for E-HCNG networks, which can better adapt to renewable energy fluctuation and improve the operational economy.		
%		The framework not only sufficiently models variable HVF and its impact on HCNG flow and load, but also proposes a system-level re-dispatch affine policy to respond to the renewable energy variations. 
		
		A flexible operation framework for E-HCNG networks is proposed, enhancing energy balance and operational economic efficiency.		
		The framework not only sufficiently characterizes variable HVF and its impact on HCNG flow and load, but also proposes a system-level affine policy for real-time re-dispatch of operations against renewable energy volatility.
%		to respond to the renewable energy variations.
%		The proposed framework enables the E-HCNG network to better adapt to renewable energy fluctuation and improve the operational economy.		

	}
	%为保障灵活运行的可靠性，单峰分布鲁棒机会约束被引入到提出的框架中；通过充分利用概率统计信息，所提出的方法可以在提供运行约束概率满足的前提下，降低过保守性。同时，高效的求解转化方法被提出。
%	\item{
%		To ensure the reliability and safety of the flexible operation, DRJCCs with unimodality information are introduced to the proposed framework.
%		With the probabilistic statistical informations of renewable energy uncertainty, the proposed method can ensure the satisfy probability of constraints while reducing the over-conservation. 
%		Meanwhile, the corresponding efficient solving method for the DRCC is proposed.
%	}
	\item{
%		 	To restrict the re-dispatch volume for operational reliability, DRJCC is introduced to limit the overall violation probability of operational constraints, based on an unimodality-skewness informed ambiguity set to characterize the renewable energy uncertainties.
%		 Moreover, to reduce the over-conservation in the DRJCC transformation, an improved risk allocation method is proposed, utilizing the correlations of violations under the affine policy. The proposed method can improve the operating economy while ensuring the violation risk within the setting level, compared with general DRJCC transformations.
		 	To guarantee the operational reliability of the affine policy, DRJCC is introduced to limit the overall violation probability of operational constraints, based on an unimodality-skewness informed ambiguity set to characterize the renewable energy uncertainties.
			More importantly, to reduce the over-conservation of DRJCC decomposition in the solving process, an improved risk allocation method is proposed, utilizing the correlations of violations under the affine policy.
%			 The proposed method can improve the operating economy while ensuring the violation risk within the setting level, compared with general DRJCC transformations.

}
	%针对掺杂氢气对模型引入的高次非凸项，不同的转化方法被定制化的开发。最终问题被转化为混合整数二阶锥规划问题。此外，所提出方法的有效性与优越性也在实验中被充分验证。
	\item{
%	To solve the problem with high performance, first, an improved solving method for DRJCC with affine policy is proposed. It can reduce the over-conservation and improve the economic while ensuring the reliability.
	To tackle the high-order nonconvexities introduced by the variable HVF, first, the bilinear terms are addressed by the binary expansion approximation, maintaining the consistency of HVF.
	Moreover, for the nonconvex cubic HCNG flow constraints, a series of customized approximations via the concave-convex procedure with multiple auxiliary variables is proposed.
	Ultimately, the problem is transformed into a mixed-integer second-order cone programming problem, which can be efficiently solved.
%	Moreover, for the nonconvex cubic HCNG flow constraints, a series customized transformation is proposed.
%	The problem is finally transformed into a mixed-integer second-order cone programming problem, which can be effectively solved.
%	. And the effectiveness of the proposed method is validated by case studies.
	%【看reliability要不要加冠词】
	}
%	\item{
%		For the high-order nonconvexities introduced by hydrogen blending, various customized transformation methods are developed. Finally the problem is transformed into a mixed-integer second-order cone programming problem. In addition, the effectiveness and superiority of the proposed method are fully verified in the experiments.
%	}

\end{enumerate}

The remainder of this paper is organized as follows: 
Section II  constructs the formulation of E-HCNG operation with variable HVF based on DRJCC and system-level re-dispatch affine policy.
Section III develops the solutions for DRJCC and non-convexity terms.
Case studies are presented in Section IV to demonstrate the effectiveness of the proposed method.
Finally, Section V concludes the paper.

%Specifically, the operation security is guaranteed by DRCC.

%integrated pre-dispatch with affine re-dispatch is a practical solution, where chance constraints can be used to restrict the re-dispatch satisfying the operational constraints with a certain probability under uncertainties \cite{christos2021DRJCC}.
%%考虑DRO
%In addition, given the stochastic planning-based chance constraints
%severly depending on accurate probability distributions, distributionally robust chance constraint (DRCC) are widely used in recent years \cite{}.
%DRCC can ensure the feasible probability under a cluster of uncertainties with specific statistical characteristics \cite{}.
%estimation of a cluster of probability distributions.

% it may be difficult to meet the energy balance before the end of the day, and for this reason it is necessary to design redispatch mechanisms.

%However, one of the key issues for HNCG operation is the volumetric proportion of doped hydrogen.
%In addition, considering the high correlation between Hydrogen electrolysis and renewable power, the cooperative operation of the coupled electric-HCNG (E-HCNG) network also increases the complexity of the problem.

%Considering the high correlation between electrolytic hydrogen production and renewable energy, operational scheduling of elec-HCNG networks is often required together. 

%\vspace{-4pt}
\section{Problem Formultaion}
\vspace{0pt}
%A typical E-HCNG network configuration is shown in Fig. \ref{fig_config}. 
The configuration of the E-HCNG network system is illustrated in Fig. \ref{fig_config}. 
Hydrogen stations, which are key components of this system, connect the grid to the HCNG network.
These stations are equipped with power-to-hydrogen (P2H) electrolyzers and hydrogen storage facilities \cite{chen2020GuanDaoXianQing1}.
The produced hydrogen is buffered, stored, and then blended into gas pipelines according to the schedule.

% proposed by \cite{chen2020GuanDaoXianQing1},
%these stations are equipped with power-to-hydrogen (P2H) electrolyzers and hydrogen storage facilities.
%The hydrogen produced is buffered and stored and then blended into gas pipelines according to the schedule.

%In this section, to demonstrate the operational mechanism, the deterministic scheduling formulation is first presented.
%Subsequently, a system-level affine policy in response to renewable energy fluctuations is established.
%Finally, the DRJCC is introduced, and a unified E-HCNG operation model is formulated.

In this section, the deterministic scheduling formulation for the E-HCNG network is first presented to illustrate the basic mode of operations. 
Subsequently, based on the deterministic model, a system-level affine policy is established to re-dispatch operations in response to renewable energy volatility.
Later on, to ensure the operational reliability of affine policy, the DRJCC is introduced with respect to renewable energy uncertainties. 
Finally, a unified E-HCNG operation model is formulated.
\vspace{-8pt}
\subsection{Deterministic E-HCNG Scheduling Formulation}
%Without considering renewable energy fluctuations, the deterministic model is established without incorporating the affine policy and reserve cost, as detailed below. 
To demonstrate the operation mechanism of E-HCNG networks, the deterministic model is first established without incorporating the affine policy and reserve cost, as detailed below. 

\subsubsection{Objective Function}
The objective of the scheduling is to minimize the operating cost, which can be expressed as
\vspace{-3pt}
\begin{align*}
\underset{\Phi ^0}{\mathrm{Min}}\,\,F^0=&\sum_{t=1}^T{\sum_{i\in \Omega _{\mathrm{NG}}}{\left( a_{i}^{\mathrm{NG}}\left( P_{i,t}^{\mathrm{NG}} \right) ^2+b_{i}^{\mathrm{NG}}P_{i,t}^{\mathrm{NG}}+c_{i}^{\mathrm{NG}} \right)}}
\\
&+\sum_{t=1}^T{\sum_{m\in \Omega _{\mathrm{S}}}{c_{m}^{S}}}F_{m,t}^{S},
 \tag{1}
\end{align*}

\vspace{-6pt}
\noindent where $\varPhi ^0$ represents the set of deterministic operational decisions for both the grid and the HCNG network. The first and second terms of $F^0$ represent the cost of non-gas fired units (non-GFUs) generation and gas source production, respectively.

\subsubsection{Constraints of Power Flow} 
\begin{gather*}
	\sum_{i\in \Omega _{\mathrm{NG}}}{P_{i,t}^{\mathrm{NG}}}+\sum_{i\in \Omega _{\mathrm{G}}}{P_{i,t}^{\mathrm{G}}}+\sum_{i\in \Omega _{\mathrm{WF}}}{P_{i,t}^{\mathrm{PV}}}-\sum_{i\in \Omega _{\mathrm{P}2\mathrm{G}}}{P_{i,t}^{\mathrm{P2H}}}=\sum_{i\in \Omega _{\mathrm{EL}}}{P_{i,t}^{\mathrm{L}}}, \tag{2a}
	\\
	-P_{l}^{\max}\le \sum_{i\in \Omega _{\mathrm{NG}}}\!\!{\mathrm{GSF}}_{l,i}P_{i,t}^{\mathrm{NG}}+\!\sum_{i\in \Omega _{\mathrm{G}}}\!{\mathrm{GSF}}_{l,i}P_{i,t}^{\mathrm{G}}+\!\sum_{i\in \Omega _{\mathrm{WF}}}\!{\mathrm{GSF}}_{l,i}P_{i,t}^{\mathrm{PV}},
	\\
	-\sum_{i\in \Omega _{\mathrm{P}2\mathrm{G}}}\!\!{\mathrm{GSF}}_{l,i}P_{i,t}^{\mathrm{P2H}}-\!\sum_{i\in \Omega _{\mathrm{EL}}}\!\!{\mathrm{GSF}}_{l,i}P_{i,t}^{\mathrm{L}}\le P_{l}^{\max}, \tag{2b}
	\\
	P_{i}^{\mathrm{NG},\min}\le P_{i,t}^{\mathrm{NG}}\le P_{i}^{\mathrm{NG},\max}, \tag{2c}
	\\
	P_{i}^{\mathrm{G},\min}\le P_{i,t}^{\mathrm{G}}\le P_{i}^{\mathrm{G},\max}. \tag{2d}
\end{gather*}

Constraint (2a) ensures the power balance of the grid.
For illustration, suppose the renewable energy source is photovoltaic (PV).
Constraint (2b) limits the power on each line of the grid, where $\mathrm{GSF}_{l,i}$ is the generation shift factor from node $i$ to line $l$ \cite{daniel2019VarianceAware}. 
Constraints (2c) and (2d) set the power limits for non-GFUs and gas fired units (GFUs), respectively.
\vspace{0pt}
\subsubsection{Constraints of Hydrogen Station} 
%Hydrogen stations are deployed to connect the grid with the HCNG network.
%According to the structure proposed by \cite{chen2020GuanDaoXianQing1},
%these stations are equipped with power-to-hydrogen (P2H) electrolyzers and hydrogen storage facilities.
%The hydrogen produced is buffered and stored, and then blended into gas pipelines according to the schedule.
Based on the composition of hydrogen stations depicted in Fig. \ref{fig_config}, 
the operational constraints can be expressed as follows:
%At hydrogen stations, hydrogen produced in the electrolyzer is often buffered and stored in hydrogen tanks and subsequently blending into the gas pipeline \cite{chen2020GuanDaoXianQing1}.
\begin{gather*}
	F_{m,t}^{\mathrm{P2H}}=\beta _{i}^{\mathrm{P2H}}P_{i,t}^{\mathrm{P2H}}, \tag{3a}
	\\
	E_{m,t}=E_{m,t-1}+\eta _{m,t}^{\mathrm{ch}}F_{m,t}^{\mathrm{P2H}}-F_{m,t}^{\mathrm{dch}}/\eta _{m,t}^{\mathrm{dch}}, \tag{3b}
	\\
	E_{m}^{\min}\leq E_{m,t}\leq E_{m}^{\max}, \tag{3c}
\end{gather*} 
\begin{gather*}
	E_{m,0}^{}\leq E_{m,T}, \tag{3d}
	\\
	P_{i}^{\mathrm{P2H},\min}\le P_{i,t}^{\mathrm{P2H}}\le P_{i}^{\mathrm{P2H},\max}, \tag{3e}
	\\
	0 \leq F_{m,t}^{\mathrm{P2H}} \leq F_{m,t}^{\mathrm{ch},\max},\tag{3f}
	\\
	0 \leq F_{m,t}^{\mathrm{dch}} \leq F_{m,t}^{\mathrm{dch},\max}.\tag{3g}
\end{gather*} 

Constraint (3a) specifies the conversion efficiency of P2Hs.
Constraints (3b) and (3c) represent the temporal relationship and capacity limitations of hydrogen storage, respectively.
Constraint (3d) ensures the hydrogen stored at the end of the day is not less than the initial amount, guaranteeing dispatch availability for subsequent days.
Constraints (3e), (3f), and (3g) establish the limits for P2H power, and the charge and discharge rates of hydrogen storage, respectively.
%【加P2H、储能充放的最大容量限制（肯定有一版PPT是改了的，找！）】
%%%%%%%%%%%%%%%%%%%%%%%%
\begin{figure}[!t]
	\centering
	\includegraphics[width=3.2in]{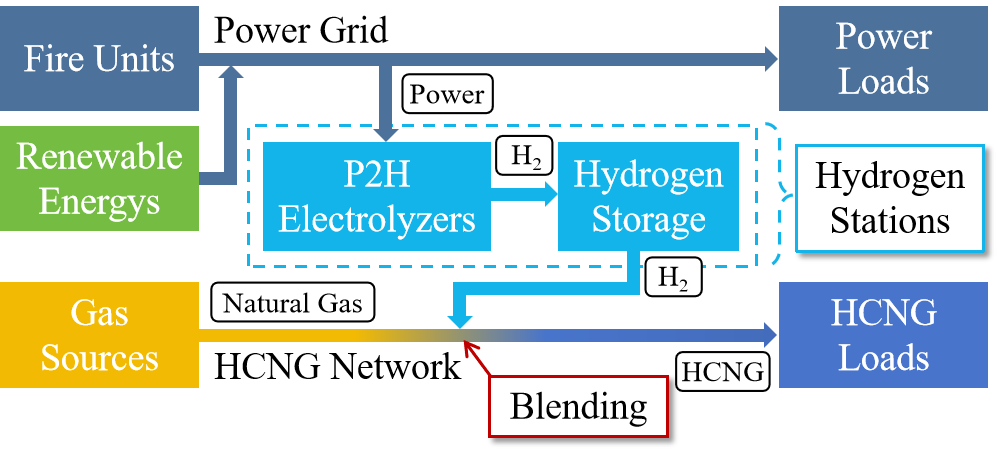}
	\caption{Configuration of E-HCNG network system.}
	\vspace{-12pt}
	\label{fig_config}
\end{figure}

\vspace{-6pt}
\subsubsection{Constraints of HCNG Flow} 
\begin{gather*}
	F_{m,t}^{S}+F_{m,t}^{dch}-F_{m,t}^{G}-F_{m,t}^{L}-\sum_{k\in \Omega _{C,m}}{v_{k}^{C}F_{k,t}^{C}}\\
	=\sum_{n\in \Omega _m}{F_{mn,t}}+\sum_{k\in \Omega _{C,m}}{F_{k,t}^{C}}, \tag{4a}
	\\
	{F_{mn,t}}^{2}= sgn\left( F_{mn,t} \right) C_{mn,t}^{}\left( {\pi _{m,t}}^{2}-{\pi _{n,t}}^{2} \right),  \tag{4b}
	\\
	LP_{mn,t}=K_{mn}\left( \pi _{m,t}+\pi _{n,t} \right) /2,\tag{4c}
	\\
	F_{mn,t}^{\mathrm{in}}-F_{mn,t}^{\mathrm{out}}=LP_{mn,t}-LP_{mn,t-1},\tag{4d}
	\\
	F_{mn,t}=\left( F_{mn,t}^{\mathrm{in}}+F_{mn,t}^{\mathrm{out}} \right) /2,\tag{4e}
	\\
	\sum_{mn\in \Omega _{\mathrm{pipe}}}{L}P_{mn,0} \le \sum_{mn\in \Omega _{\mathrm{pipe}}}{L}P_{mn,T} ,\tag{4f}
	\\
	-F_{mn}^{\max}\le F_{mn,t}\le F_{mn}^{\max}\tag{4g}
	\\
	r_{k}^{\min}\pi _{k,t}^{\mathrm{in}}\le \pi _{k,t}^{\mathrm{out}}\le r_{k}^{\max}\pi _{k,t}^{\mathrm{in}},\tag{4h}
	\\
	0\le F_{k,t}^{C}\le F_{k,t}^{C,\max},\tag{4i}
	\\
	\pi _{m}^{\min}\le \pi _{m,t}\le \pi _{m}^{\max},\tag{4j}
	\\
	F_{m}^{\mathrm{S},\min}\le F_{m,t}^{\mathrm{S}}\le F_{m}^{\mathrm{S},\max},\tag{4k}
	\\
	w_t=\frac{\sum_{m\in \Omega _{\mathrm{P2H}}}{F_{m,t}^{\mathrm{dch}}}+w_{t-1}^{}\sum_{m\in \Omega _{\mathrm{pipe}}}{LP_{m,t-1}}}{\sum_{m\in \Omega _{\mathrm{GS}}}{F_{m,t}^{\mathrm{S}}}+\sum_{m\in \Omega _{\mathrm{P2H}}}{F_{m,t}^{\mathrm{dch}}}+\sum_{m\in \Omega _{\mathrm{pipe}}}{LP_{m,t-1}}},\tag{4l}
	\\
	C_{mn,t}^{}=\left( \frac{\pi}{4} \right) ^2\frac{{D_{ij}}^{5}R{\tau_{n}}^{2}}{\lambda L_{ij}{p_{n}}^{2}M_tZ\tau} ,\tag{4m}
	\\
	M_t=w_tM^{\mathrm{Hy}}+\left( 1-w_t \right) M^{Gas},\tag{4n}
	\\
	F_{m,t}^L=\frac{Heat_{m,t}^L}{HHV_t},\tag{4o}
	\\
	F_{m,t}^G=\frac{\beta^G_{m}P_{m,t}^G}{HHV_t},\tag{4p}
	\\
	HHV_t =w_tHHV^{\mathrm{Hy}}+\left( 1-w_t \right) HHV^{\mathrm{Gas}}.\tag{4q}
\end{gather*}
%\begin{gather*}
%		F_{m,t}^G=\frac{\beta^G_{m}P_{m,t}^G}{HHV_t},\tag{4p}
%		\\
%	HHV_t =w_tHHV^{\mathrm{Hy}}+\left( 1-w_t \right) HHV^{\mathrm{Gas}}.\tag{4q}
%\end{gather*}

%【少关于GFU的耦合约束！】
Formulations (4a)-(4k) describe the general operational constraints of gas flow, accounting for line packs and compressors.
Constraint (4a) ensures node flow balance.
Constraint (4b) describes the relationship between line flow and nodal pressure \cite{m.2016Steady}, where the sign function ($sgn$) indicates the direction of flow.
Constraint (4c) details the interaction between gas storage in line packs and nodal pressure.
Constraint (4d) describes the temporal relationship between line pack storage and flow rates.
Constraint (4e) calculates the average gas flow in pipelines.
Constraint (4f) restricts the line pack storage within a day horizon, similar to constraint (3d).
Constraint (4g) sets the maximum permissible gas flow rate.
Constraints (4h) and (4i) limit the output pressure and flow rate of gas compressors, respectively.
Constraints (4j) and (4k) establish the limits for nodal pressure and gas source rate.

Formulations (4l)-(4p) describe the unique constraints for HCNG flow introduced by the variable HVF.
Constraint (4l) estimates the HVF considering line pack storages \cite{yunpeng2024ChongDa}.
Constraints (4m) and (4n) reflect how HVF influences the parameters in the HCNG flow equation.
Specifically, in (4m), $D_{ij}$, $\lambda_{ij}$, $L_{ij}$ represent the diameter, friction coefficient, and length of pipeline $ij$, respectively.
$R$ denotes the gas constant, $\tau_n$ and $p_{n}$ are the standard temperature and pressure, $M$ is the gas molar mass, $Z$ is the compressibility factor, and $\tau$ represents the gas temperature.
Constraints (4o) and (4p) calculate the required HCNG volume by load and GFUs, respectively, based on the heat value calculated through (4q) \cite{chen2020GuanDaoXianQing1}.
%calculate the volume load of HCNG at variant HVF, based on heat value \cite{chen2020GuanDaoXianQing1}.

%【检查一下角标！有错误】
%【检查一下是否有公式疏漏，比如上下界】
%【改一下weymouth参数符号，T不要重复用】
%不单独介绍HVF估算方程了，整合到一起介绍更好。
\vspace{-10pt}
\subsection{Affine Policy for the E-HCNG System}
%【下面这段有点太像了，稍微改改】
%In real operations, renewable energy varies and thus cannot be accurately predicted. 
In actual operations, the rapid volatility of renewable energy threatens the energy balance and economic efficiency of the E-HCNG network operations.
%This threatens the energy balance and economics of the E-HCNG network operations.
To address this issue, a system-level affine policy is established for E-HCNG networks.
The affine policy can achieve real-time re-dispatch of operations based on the discrepancy between predicted and actual renewable energy outputs.
% thus maintaining energy balance and enhancing the economic efficiency of E-HCNG operations.

%In real operations, renewable energy outputs fluctuate and cannot be accurately predicted in day-ahead. 
%This unpredictability threatens the energy balance of E-HCNG networks in real-time operations.
%To address this issue, a system-level re-dispatch affine policy is established for E-HCNG networks.
%It can affinely adjust operations based on the discrepancy between predicted and actual renewable energy outputs in the real-time dispatch, thus maintaining energy balance and enhancing the economic efficiency of E-HCNG operations.

For clarity, let $\widetilde{\left( \cdot \right) }$ denote the actual execution of operation $\left( \cdot \right)$ after re-dispatch. Let $\boldsymbol{\xi}_t$ represent the discrepancy between the predicted and actual PV outputs at time $t$, i.e., $\tilde{P}_{i,t}^{\mathrm{PV}}=P_{i,t}^{\mathrm{PV}}+\xi _{i,t}$, and $\boldsymbol{\xi}_t=\{\xi _{i,t}\}$. 

%Let vector $\boldsymbol{\xi}_t$ denote the difference between the predicted and actual renewable energy (PV in this paper) output at time $t$. The affine policy can be expressed as follows.

\subsubsection{Affine Policy for the Grid}
%Let $\widetilde{\left( \cdot \right) }$ denote the actual executed values of operation $\left( \cdot \right)$ after adjustment. Let $\boldsymbol{\xi}_t$ represent the discrepancy between the predicted and actual PV outputs at time $t$, i.e., $\tilde{P}_{i,t}^{\mathrm{PV}}=P_{i,t}^{\mathrm{PV}}+\xi _{i,t}$, and $\boldsymbol{\xi}_t=\{\xi _{i,t}\}$. 

Under the affine policy, the actual outputs of non-GFUs, GFUs, and P2Hs can be expressed as
\begin{align*}
	\tilde{P}_{i,t}^{\mathrm{NG}}&=P_{i,t}^{\mathrm{NG}}-\alpha _{i,t}^{\mathrm{NG}} \boldsymbol{e}^{\mathrm{T}}\boldsymbol{\xi} _t , \tag{5a}
	\\
	\tilde{P}_{i,t}^{\mathrm{G}}&=P_{i,t}^{\mathrm{G}}-\alpha _{i,t}^{\mathrm{G}} \boldsymbol{e}^{\mathrm{T}}\boldsymbol{\xi} _t , \tag{5b}
	\\
	\tilde{P}_{i,t}^{\mathrm{P2H}}&=P_{i,t}^{\mathrm{P2H}}+\alpha _{i,t}^{\mathrm{P2H}} \boldsymbol{e}^{\mathrm{T}}\boldsymbol{\xi} _t , \tag{5c}
\end{align*}
where $\alpha_{i,t}^{(\cdot)}$ represents the corresponding adjust factor of the affine policy, and $\boldsymbol{e}$ is a vector of all ones. 
Moreover, substituting (5a)-(5c) into (2a) results in a derived constraint for the adjustment factors as follow, which ensures the energy balance during adjustment.
%To ensure energy balance, substituting (5a)-(5c) into (2a) results in a derived constraint for the adjustment factors as
\begin{align*}
	\sum_{i\in \Omega _{\mathrm{NG}}}{\alpha _{i,t}^{\mathrm{NG}}}+\sum_{i\in \Omega _{\mathrm{G}}}{\alpha _{i,t}^{\mathrm{G}}}+\sum_{i\in \Omega _{\mathrm{P}2\mathrm{G}}}{\alpha _{i,t}^{\mathrm{P2H(P)}}}=1. \tag{6}
\end{align*}

\subsubsection{Affine Policy for Hydrogen Components}
%Gas fired units, P2Hs, hydrogen storages interface the grid with the E-HCNG network. Therefore, when faced with PV fluctuations, the components of E-HCNG network can adjust coordinately to improve operational efficiency.

For P2H and hydrogen storage, the operation under affine policy can be expressed as
\begin{align*}
\tilde{F}_{m,t}^{\mathrm{P2H}}&=F_{m,t}^{\mathrm{P2H}}+\alpha _{m,t}^{\mathrm{P2H(F)}} \boldsymbol{e}^{\mathrm{T}}\boldsymbol{\xi} _t ,  \tag{7a}
\\
\alpha_{i,t}^{\mathrm{P2H(F)}}&=\beta _{i}^{\mathrm{P2H}}\alpha_{i,t}^{\mathrm{P2H(P)}} ,\tag{7b}
\\
\tilde{F}_{m,t}^{\mathrm{dch}}&=F_{m,t}^{\mathrm{dch}}+\alpha _{m,t}^{\mathrm{dch}} \boldsymbol{e}^{\mathrm{T}}\boldsymbol{\xi} _t   ,\tag{7c}
\\
\tilde{E}_{i,t} &= E_{i,t} + \alpha_{i,t}^{E} \boldsymbol{e}^{\mathrm{T}}\boldsymbol{\xi} _t  ,\tag{7d}
\\
\alpha_{i,t}^{E} &= \alpha_{i,t-1}^{E} + \eta _{i}^{\mathrm{ch}}\alpha_{m,t}^{\mathrm{P2H(F)}} - \alpha _{m,t}^{\mathrm{dch}}/\eta _{i}^{\mathrm{dch}} .\tag{7e}
\end{align*}

Equation (7a) represents the actual P2H output. 
By substituting (7a) and (5c) into (3a), the relationship between the adjustment factors for power input and hydrogen output in the P2H can be obtained in (7b).
Equations (7c) and (7d) specify the actual amounts of hydrogen discharged and stored, respectively. 
Equation (7e) describes how the adjustment factors relate the hydrogen storage capacities to both the P2H output and the storage discharge.

\subsubsection{Affine Policy for the HCNG Network}
GFUs, P2Hs, and hydrogen storages link the grid to HCNG networks.
Thus, by fine-tuning the HCNG networks, the system can more effectively mitigate the impacts of renewable energy fluctuations, ultimately improving operational economic efficiency.
\begin{align*}
	\tilde{F}_{m,t}^{\mathrm{S}}&=F_{m,t}^{\mathrm{S}}+\alpha _{m,t}^{\mathrm{S}} \boldsymbol{e}^{\mathrm{T}}\boldsymbol{\xi} _t, \tag{8a}
	\\
	\tilde{F}_{k,t}^{\mathrm{C}}&=F_{k,t}^{\mathrm{C}}+\alpha _{k,t}^{\mathrm{C}} \boldsymbol{e}^{\mathrm{T}}\boldsymbol{\xi} _t, \tag{8b}
	\\
	\tilde{\pi}_{m,t}&=\pi _{m,t}+\alpha _{m,t}^\pi \boldsymbol{e}^{\mathrm{T}}\boldsymbol{\xi} _t,  \tag{8c}
	\\
	\tilde{F}_{mn,t}^{\mathrm{in}}&=F_{mn,t}^{\mathrm{in}}+\alpha _{mn,t}^{\mathrm{in}} \boldsymbol{e}^{\mathrm{T}}\boldsymbol{\xi} _t,  \tag{8d}
	\\
	\tilde{F}_{mn,t}^{\mathrm{out}}&=F_{mn,t}^{\mathrm{out}}+\alpha _{mn,t}^{\mathrm{out}} \boldsymbol{e}^{\mathrm{T}}\boldsymbol{\xi} _t,  \tag{8e}
	\\
	\tilde{F}_{mn,t}&=F_{mn,t}+\alpha _{mn,t} \boldsymbol{e}^{\mathrm{T}}\boldsymbol{\xi} _t , \tag{8f}
	\\
		\tilde{F}_{m,t}^{\mathrm{G}}&=F_{m,t}^{\mathrm{G}}+\alpha _{m,t}^{\mathrm{G}} \boldsymbol{e}^{\mathrm{T}}\boldsymbol{\xi} _t . \tag{8g}
\end{align*}

\vspace{-3pt}
Equations (8a)-(8f) represent the actual operations of gas source, nodal pressure, pipeline inflow, pipeline outflow, pipeline average flow rates, and GFUs, respectively.
Substituting these equations into (4a)-(4e) yields the relationships among the adjustment factors within the HCNG network as follows:
\begin{align*}
	\alpha _{m,t}^{\mathrm{S}}+\alpha _{m,t}^{\mathrm{dch}}&-\alpha _{m,t}^{\mathrm{G}}-\sum_{k\in \Omega _{\mathrm{C,m}}}{v_{k}^{\mathrm{C}}\alpha _{k,t}^{\mathrm{C}}} ,
	\\
	&=\sum_{n\in \Omega _m}{\alpha _{mn,t}}+\sum_{k\in \Omega _{\mathrm{C,m}}}{\alpha _{k,t}^{\mathrm{C}}},\tag{9a}
	\\
	{\alpha _{mn,t}}^{2}&=\mathrm{sign}\left( F_{mn,t} \right) C_{mn,t}\left( ({\alpha _{m,t}^\pi})^{2}-{(\alpha _{n,t}^\pi)}^{2} \right),\tag{9b}
	\\
	\alpha _{mn,t}F_{mn,t}&=\mathrm{sign}\left( F_{mn,t} \right) C_{mn,t}^{}\left( \alpha _{m,t}^\pi \pi_{m,t}-\alpha _{n,t}^\pi\pi _{n,t} \right)\!,\!\!\!\!\tag{9c}
	\\
	\alpha _{mn,t}^{\mathrm{in}}-\alpha _{mn,t}^{\mathrm{out}}&=\frac{K_{mn}}{2}\left( \alpha^\pi _{m,t}+\alpha^\pi _{n,t}-\alpha^\pi_{m,t-1}-\alpha^\pi_{n,t-1} \right),\tag{9d} 
	\\
	\alpha _{mn,t}&=\left( \alpha _{mn,t}^{\mathrm{in}}+\alpha _{mn,t}^{\mathrm{out}} \right) /2.\tag{9e}
\end{align*}

\vspace{-3pt}
Equation (9a) ensures the nodal flow balance of adjustment factors.
Equations (9b) and (9c) describe the adjustment relationship between line flows and nodal pressures.
Equation (9d) represents the line pack adjustments influenced by changes in nodal pressure.
Equation (9e) defines the relationship of adjustment factors between average flow and inflow/outflow rates.
Furthermore, it is assumed that HVF remains unaffected by these adjustments.
This stability is attributed to the adjustments being relatively minor compared to the original scheduling, and is further supported by the buffering effect of line packs.

\vspace{-10pt}
\subsection{Formulation of DRJCC E-HCNG Scheduling}
%在描述不确定性前，还有没有变量要加？
%在哪介绍目标函数？
Based on the above affine policy, operations are adjusted with respect to $\boldsymbol{\xi}_t$. 
%However, amount of adjustments should within the theoperational constraints.
However, these operations should remain within the operational constraints.
% even under the worst distribution of $\boldsymbol{\xi}_t$.
Therefore, to ensure reliability, DRJCC is introduced to limit the violation probability of operation constraints, under the uncertainties of renewable energy.
%by restricting adjustment factors. 
In this subsection, first, a distributionally robust ambiguity set is developed to characterize the uncertainties of PV output discrepancy $\boldsymbol{\xi}$. 
Subsequently, DRJCC and a unified operational formulation are established.
%Therefor, to ensure the reliability of operations amid fluctuations in renewable energy, DRJCC is introduced to limit the violation risk.
%for PV output uncertainties is presented. 

%%, resulting in a new formultion of E-HCNG scheduling.
%In this part, first, the distributionally robust ambiguity set for PV output uncertainties is presented. Then, the DRCC formulation is constructed.

\subsubsection{Ambiguity Set for PV Output Uncertainty}

To more precisely characterize the uncertainty of the PV output discrepancy $\boldsymbol{\xi}_t$, an unimodality-skewness informed second-order moment-based ambiguity set is introduced \cite{chao2022LiLun}.
%The unimodality-skewness information is introduced to the conventional second-order moment ambiguity set for better description of PV uncertainty. 
%The ambiguity set can be expressed as 
%This enables a finer distribution description and better decision effectiveness, which will be explained latter [TTTTT].%【在琢磨琢磨】
\begin{align*}
\!\!\!\!\mathcal{D} =\left\{ \mathbb{P} _{\xi}\!\in\! \mathcal{P(\mathcal{A})}\! \left| \!\begin{array}{l}
	\left[ \mathbb{E} _{\mathbb{P}}(\boldsymbol{\xi })-\boldsymbol{\mu }_0 \right] ^{\mathrm{T}}\mathbf{\Sigma }_{0}^{-1}\left[ \mathbb{E} _{\mathbb{P}}(\boldsymbol{\xi })-\boldsymbol{\mu }_0 \right] \le \gamma _1,\\
	\mathbb{E} _{\mathbb{P}}\left[ \left( \boldsymbol{\xi }-\boldsymbol{\mu }_0 \right) \left( \boldsymbol{\xi }-\boldsymbol{\mu }_0 \right) ^{\mathrm{T}} \right] \preceq \gamma _2\mathbf{\Sigma }_0,\\
	\mathbb{P} _{\xi}\,\,\mathrm{is}\,\,\alpha \-\mathrm{unimodal}\,\,\mathrm{about}\,\,0.\\
\end{array} \right. \!\!\!\!\! \right\} \!,\!  \tag{10}
\end{align*}
where $\boldsymbol{\xi}=\{\boldsymbol{\xi}_t, \forall t \in T\}$. 
%$\mathbb{P}_\xi$ and $\mathcal{A}$ represent the probability distribution and support set of $\boldsymbol{\xi}$, respectively. 
$\mathbb{P}_\xi$ represents the probability distribution of $\boldsymbol{\xi}$. $\mathcal{P(\mathcal{A})}$ is the set of all distributions of $\boldsymbol{\xi}$ on the sigma-field of support set $\mathcal{A}$.
%The first and the second row in set (10) restrict the first-order and second-order moment of $\boldsymbol{\xi}$.
$\boldsymbol{\mu}_0$ and $\mathbf{\Sigma }_{0}$ are the estimated first- and second-order moment of $\boldsymbol{\xi}$.
$\gamma_1$ and $\gamma_2$ restrict the Isogai's mode skewness and the deviation of the second-order moment from $\mathbf{\Sigma }_{0}$\cite{chao2022LiLun}.
The values of $\gamma_1 \geq 0$ and $\gamma_2 \geq 1$ can be typically determined through cross-validation in practice \cite{yiling2018CC_solve}.

The $\alpha$-unimodal is defined as: If $\boldsymbol{\xi}$ admits a density $f_\xi(x)$, then $\mathbb{P}_\xi$ is considered $\alpha$-unimodal about 0 if and only if the generalized univariate density $\tilde{f}{\alpha,x}(t) = t^{d-\alpha} f_\xi (tx)$ is nonincreasing for $t > 0$ across all $z \in \mathbb{R}^d$ \cite{Olshen_Savage_1970}.
The unimodality defines the degree of aggregation of PV prediction errors, reducing over-consideration for rarely deviated samples. 
This property is suitable for uncertainties of renewable energy output prediction errors \cite{chao2022LiLun}.

\subsubsection{Formulation of the DRJCC}

Under the affine policy, operations after adjustment should 
comply with both component limits and reserve capacities.
As for the component limits, the inequality constraints from the deterministic model are modified to incorporate these adjustments effectively. 
For clarity, consider constraints (2b) and (2c) as examples.
Their modified constraints $\widetilde{(2\mathrm{b)} }$ and $\widetilde{(2\mathrm{c)} }$ under the affine policy can be reformulated as
\vspace{-5pt}
\begin{gather*}
	-P_{l}^{\max}\le \!\sum_{i\in \Omega _{\mathrm{NG}}}\!{G}SF_{l,i}\tilde{P}_{i,t}^{\mathrm{NG}}+\!\sum_{i\in \Omega _{\mathrm{G}}}\!{G}SF_{l,i}\tilde{P}_{i,t}^{\mathrm{G}}+\!\sum_{i\in \Omega _{\mathrm{WF}}}\!\!{G}SF_{l,i}\tilde{P}_{i,t}^{\mathrm{PV}} 
	\\
	-\sum_{i\in \Omega _{\mathrm{P}2\mathrm{G}}}\!{G}SF_{l,i}\tilde{P}_{i,t}^{\mathrm{P2H}}-\sum_{i\in \Omega _{\mathrm{EL}}}\!{G}SF_{l,i}\tilde{P}_{i,t}^{\mathrm{L}}\le P_{l}^{\max} 
	,\tag{$\widetilde{2\mathrm{b} }$}
	\\
	P_{i}^{\mathrm{NG},\min}\le \tilde{P}_{i,t}^{\mathrm{NG}}\le P_{i}^{\mathrm{NG},\max}  \tag{$\widetilde{2\mathrm{c} }$},
\end{gather*}

\vspace{-5pt}
\noindent where $\tilde{P}_{i,t}^{(\cdot)}$ represents the corresponding operations of ${P}_{i,t}^{(\cdot)}$ under affine policy, as detailed in Section II-B.
Similarly, other inequality constraints under the affine policy including $\widetilde{(2\mathrm{d)}},\widetilde{(3\mathrm{c)}}-\widetilde{(3g)},\widetilde{(4f)}-\widetilde{(4k)}$ are adjusted in the same way.

In addition, the adjustments made by affine policy should comply with the reserve capacities, i.e.
\vspace{-5pt}
\begin{align*}
R_{i,t}^{\mathrm{NG},\mathrm{D}}&\le -\alpha _{i,t}^{\mathrm{NG}}\boldsymbol{e}^{\mathrm{T}}\boldsymbol{\xi }_t\le R_{i,t}^{\mathrm{NG},\mathrm{U}} ,\tag{11a}
\\
R_{i,t}^{\mathrm{G},\mathrm{D}}&\le -\alpha _{i,t}^{\mathrm{G}}\boldsymbol{e}^{\mathrm{T}}\boldsymbol{\xi }_t\le R_{i,t}^{\mathrm{G},\mathrm{U}} ,\tag{11b}
\\
R_{i,t}^{\mathrm{P}2\mathrm{H},\mathrm{D}}&\le \alpha _{i,t}^{\mathrm{P}2\mathrm{H}}\boldsymbol{e}^{\mathrm{T}}\boldsymbol{\xi }_t\le R_{i,t}^{\mathrm{P}2\mathrm{H},\mathrm{U}} ,\tag{11c}
\\
R_{m,t}^{\mathrm{S},\mathrm{D}}&\le \alpha _{i,t}^{\mathrm{S}}\boldsymbol{e}^{\mathrm{T}}\boldsymbol{\xi }_t\le R_{m,t}^{\mathrm{S},\mathrm{U}},\tag{11d}
\end{align*}
where variables $R_{i,t}^{(\cdot),\mathrm{U}}$ and $R_{i,t}^{(\cdot),\mathrm{D}}$ denote the corresponding upper and lower reserve capacities of $(\cdot)$, which should also be determined in day-ahead.
%for component $i$ at $t$. 

%As shown in Section II-B, the operational recourses are affine functions of PV output prediction error.
%To ensure the operational reliability of the E-HCNG network, DRJCC is implemented to limit the violation probability of any operational constraints under the uncertainties of PV output.
To ensure reliability, DRJCC is implemented to limit the violation probability under the uncertainties of PV output prediction discrepancy.
%that any operational constraints 
The DRJCC can be expressed as
\vspace{-5pt}
\begin{gather*}
\underset{\mathbb{P} _{\xi}\in \mathcal{D}}{\mathrm{inf}}\,\,\mathbb{P} _{\boldsymbol{\xi }}\left\{ \begin{array}{c}
	\widetilde{(2\mathrm{b)}}-\widetilde{(2\mathrm{d)}},\widetilde{(3\mathrm{c)}}-\widetilde{(3g)},\\
	\widetilde{(4f)}-\widetilde{(4k)},\left( 11a \right) -\left( 11d \right)\\
\end{array} \right\} \ge 1-\epsilon,   \tag{12}
\end{gather*}
where $\mathbb{P} _{\boldsymbol{\xi}}$ denotes the probably distribution of $\boldsymbol{\xi}$ within the ambiguity set $\mathcal{D}$ defined in (10).
The parameter $\epsilon$ is the overall violation probability, also known as the violation risk, of the DRJCC.
%Constraints $\widetilde{(2\mathrm{b)}}\!-\!\widetilde{(2\mathrm{d)}}$ correspond to the power constraints for power lines, non-GFUs, and GFUs in Section II-A(2).
%Constraints $\widetilde{(3\mathrm{c)}}\!-\!\widetilde{(3\mathrm{g)}}$ correspond to the limits on gas storage capacity, P2H rate, and hydrogen storage discharge rate in Section II-A(3).
%Constraints $\widetilde{(4\mathrm{f)}}\!-\!\widetilde{(4\mathrm{k)}}$ correspond to the constraints for line package gas storage, pipeline flow rate, compressor pressure and flow rate, nodal pressure, and gas source capacities in Section II-A(3).
$\widetilde{(2\mathrm{b)}}\!-\!\widetilde{(2\mathrm{d)}}$, $\widetilde{(3\mathrm{c)}}\!-\!\widetilde{(3\mathrm{g)}}$, and $\widetilde{(4\mathrm{f)}}\!-\!\widetilde{(4\mathrm{k)}}$ represent inequality constraints of grid, hydrogen stations, and HCNG network, under affine policy, respectively.

\subsubsection{The Unified Operating Formulation for E-HCNG Networks}
Integrating the above formulations, the unified DRJCC scheduling for the E-HCNG network under the affine policy can be expressed as:
\vspace{-5pt}
%\begin{align*}
%	&\underset{\Phi }{\min}\,\,F:=\sum_{t=1}^T{\sum_{i\in \Omega _{\mathrm{NG}}}{\left( a_{i,t}^{\mathrm{NG}}\left( P_{i,t}^{\mathrm{NG}} \right) ^2+b_{i}^{\mathrm{NG}}P_{i,t}^{\mathrm{NG}} + c_{i}^{\mathrm{NG}} \right)}}
%	\\
%	&\qquad\qquad+\sum_{t=1}^T{\sum_{m\in \Omega _{\mathrm{S}}}{C_{m}^{\mathrm{S}}}}F_{m,t}^{\mathrm{S}}+\sum_{t=1}^T{\sum_{i\in \Omega _{\mathrm{G}}}{\left( R_{i,t}^{\mathrm{G},\mathrm{U}}c_{i,t}^{\mathrm{G},\mathrm{U}}+R_{i,t}^{\mathrm{G},\mathrm{D}}c_{i,t}^{\mathrm{G},\mathrm{D}} \right)}}
%	\\
%	&\qquad\qquad+\sum_{t=1}^T{\sum_{i\in \Omega _{\mathrm{NG}}}{\left( R_{i,t}^{\mathrm{NG},\mathrm{U}}c_{i,t}^{\mathrm{NG},\mathrm{U}}+R_{i,t}^{\mathrm{NG},\mathrm{D}}c_{i,t}^{\mathrm{NG},\mathrm{D}} \right)}}
%	\\
%	&\qquad\qquad+\sum_{t=1}^T{\sum_{i\in \Omega _{\mathrm{P2H}}}{\left( R_{i,t}^{\mathrm{P}2\mathrm{H},\mathrm{U}}c_{i,t}^{\mathrm{P}2\mathrm{H},\mathrm{U}}+R_{i,t}^{\mathrm{P}2\mathrm{H},\mathrm{D}}c_{i,t}^{\mathrm{P}2\mathrm{H},\mathrm{D}} \right)}}
%	\\
%	&\qquad\qquad+\sum_{t=1}^T{\sum_{m\in \Omega _{\mathrm{S}}}{\left( R_{m,t}^{\mathrm{S},\mathrm{U}}c_{m,t}^{\mathrm{S},\mathrm{U}}+R_{m,t}^{\mathrm{S},\mathrm{D}}c_{m,t}^{\mathrm{S},\mathrm{D}} \right)}}\tag{13}
%\end{align*}
\begin{align*}
	&\underset{\Phi }{\min}\,\,F:= F^0 +\sum_{t=1}^T{\sum_{i\in \Omega _{\mathrm{G}}}{\left( R_{i,t}^{\mathrm{G},\mathrm{U}}c_{i,t}^{\mathrm{G},\mathrm{U}}+R_{i,t}^{\mathrm{G},\mathrm{D}}c_{i,t}^{\mathrm{G},\mathrm{D}} \right)}}
	\\
	&\qquad\qquad+\sum_{t=1}^T{\sum_{i\in \Omega _{\mathrm{NG}}}{\left( R_{i,t}^{\mathrm{NG},\mathrm{U}}c_{i,t}^{\mathrm{NG},\mathrm{U}}+R_{i,t}^{\mathrm{NG},\mathrm{D}}c_{i,t}^{\mathrm{NG},\mathrm{D}} \right)}}
	\\
	&\qquad\qquad+\sum_{t=1}^T{\sum_{i\in \Omega _{\mathrm{P2H}}}{\left( R_{i,t}^{\mathrm{P}2\mathrm{H},\mathrm{U}}c_{i,t}^{\mathrm{P}2\mathrm{H},\mathrm{U}}+R_{i,t}^{\mathrm{P}2\mathrm{H},\mathrm{D}}c_{i,t}^{\mathrm{P}2\mathrm{H},\mathrm{D}} \right)}}
	\\
	&\qquad\qquad+\sum_{t=1}^T{\sum_{m\in \Omega _{\mathrm{S}}}{\left( R_{m,t}^{\mathrm{S},\mathrm{U}}c_{m,t}^{\mathrm{S},\mathrm{U}}+R_{m,t}^{\mathrm{S},\mathrm{D}}c_{m,t}^{\mathrm{S},\mathrm{D}} \right)}}\tag{13}
%\end{align*}
%\begin{align*}
\\
	&\,\mathrm{s}.\mathrm{t}.\;\,\left( 2a \right) ,\left( 3a \right) ,\left( 3b \right) ,\left( 4a \right) -\left( 4e \right) ,\left( 4l \right) -\left( 4q \right) ,\left( 5a \right) -\left( 5c \right) ,\\
	&\qquad    \left( 6 \right) ,\left( 7a \right) -\left( 7e \right) ,\left( 8a \right) -\left( 8g \right) ,\left( 9a \right) -\left( 9e \right) ,\left( 12 \right),\\
\end{align*}

\vspace{-15pt}
\noindent where $\Phi $ is the set of all scheduling decisions including $\Phi ^0$ and all affine factors. 
As for object $F(\Phi)$, the first two terms consistent with the deterministic model (1),
and the third to sixth terms quantify the reserve capacity costs of GFUs, non-GFUs, P2Hs, and gas sources, respectively.
%Variables $R_{i,t}^{(\cdot),\mathrm{U}}$ and $R_{i,t}^{(\cdot),\mathrm{D}}$ denote the upper an lower reserve capacities for component $i$ at $t$. 
Parameters $c_{i,t}^{(\cdot),\mathrm{U}}$ and $c_{i,t}^{(\cdot),\mathrm{D}}$ are the corresponding per unit-costs of reserve capacities.
%The cost function $F(\Phi)$ is referred in (11).
Note that the DRJCC (12) and the high-order non-convexity introduced by HVF are intractable. These will be addressed in the next section. 

%$X_1$, $X_2$, and $X_3$ represent the operational constraint set for power network, hydrogen components, and HCNG network, respectively. %【可以简单交代下难点】

\vspace{-5pt}
\section{Solutions}
%The proposed formulation is intractable due to the DRCCs and the bilinear or even trilinear terms imposed by the HVF.

In this section, the DRJCC is first separated into individual DRCCs through an improved risk allocation method that utilizes the correlation of violations under the affine policy. 
Subsequently, the individual DRCCs are transformed into second-order cone (SOC) constraints.
Finally, customized approximations are adopted to effectively handle the non-convexities in the HCNG network.
\vspace{-10pt}
\subsection{Reformulation of DRJCCs}

%The conditional inequations in DRCCs (12a)-(12n) are all linear with decisions. 
%For ease of illustration, the DRJCC (X)-(X) can be written as the following compact forms:
For ease of illustration, the compact form of DRJCC (12) can be expressed as:
%can be written as the following compact forms:
\begin{align*}
	&\underset{\mathbb{P} _{\boldsymbol{\xi}}\in \mathcal{D}}{\mathrm{inf}}\,\,\mathbb{P} _{\boldsymbol{\xi }}\left\{ a_i\left( \boldsymbol{x} \right) \boldsymbol{e}^T\boldsymbol{\xi }\le b_i\left( \boldsymbol{x} \right) ,\forall i\in \left[ N \right] \right\} \ge 1-\epsilon , \tag{14}
%	\\
%	&Z:=\left\{ \boldsymbol{x}:\left( 14a \right) \right\},  \tag{14b}
\end{align*}
where  $a_i\left( \boldsymbol{x} \right)$ and $b_i\left( \boldsymbol{x} \right)$ are both linear with respect to the decision vector $\boldsymbol{x}$.
$\left[ N \right]=\{1,2,...,N\}$ enumerates the individual operational conditions in the DRJCC. 
%Moreover, denote the feasible set respect to the DRJCC (14) as $Z:=\left\{ \boldsymbol{x}:\left( 14a \right)\right\}$.

To tackle DRJCC, a major problem is rational allocating the violating probabilities $\epsilon$ to each constraint, so transforming the DRJCC into individual DRCCs.
\subsubsection{Existing Bonferroni Approximation Approaches for DRJCC}
%【分析现有方法的不好】
 The most commonly used \textit{averaged} Bonferroni approximation transforms the DRJCC (14) as: 
 \begin{align*}
 	& \inf _{\mathbf{P} \in \mathcal{P}} \mathbb{P}\left\{a_i(x) \boldsymbol{e}^T\xi \leq b_i(x)\right\} \geq 1-\frac{\epsilon}{N}, \forall i \in[N]. \tag{15}
 \end{align*}
 This method can easily achieve risk allocation of $\epsilon$, without increasing the complexity.
 
 Latter on, to realize a more rational risk allocation, the \textit{optimized} Bonferroni approximation was proposed, and (14) is transformed into
\begin{align*}
	& \inf _{\mathbf{P} \in \mathcal{P}} \mathbb{P}\left\{a_i(x) \boldsymbol{e}^T\xi \leq b_i(x)\right\} \geq 1-\epsilon_i, \forall i \in[N], \tag{16a}
	\\
	& \sum_{i \in[N]} \epsilon_i \leq \epsilon, \epsilon_i \geq 0, \forall i \in[N],\tag{16b}
\end{align*}

\vspace{-5pt}
\noindent where $\epsilon_i$ are recognized as decision variables.
Correspondingly, an iterative solving method was proposed to get optimal $\epsilon_i$ \cite{lun2022YangLun}.

%Although these method have their own advantages, unfortunately, 
%for the DRJCC with numerous constraints such as E-HCNG networks, above methods does not apply.
Although these methods have their own advantages, unfortunately, 
they do not suit the DRJCC that includes numerous conditions, such as E-HCNG networks.
When $N$ is large, the averaged Bonferroni approximation may result in extreme over-conservation due to the small ${\epsilon}/{N}$. And the optimized Bonferroni approximation may encounter a computational burden.

Furthermore, the Bonferroni approximation-based method potentially assumes that the violations of each condition $a_i(x) \xi \le b_i(x)$ are probabilistically mutually exclusive \cite{weijun2022LiLun}.
However, the conditions in DRJCC are highly correlated under the affine policy.
This inspired us to propose an improved transformation for DRJCC with the affine policy.

\subsubsection{Improved Transformation for DRJCC with Affine Policy}

%In this part, the customized reformulation approach for DRJCC is proposed, which adequately adapted to the specificity of affine policy.
%The proposed method can achieve better economic performance and computational efficiency while guaranteeing the total risk level of DRJCC.

%【推导】，到最后看结论是否可以写一条Theroem
For the condition such as
$	a_i\left( \boldsymbol{x} \right) ^T\boldsymbol{\xi }\le b_i\left( \boldsymbol{x} \right)$,
represent $a_i\left( \boldsymbol{x} \right)$ as
\begin{gather}
	a_i\left( \boldsymbol{x} \right) =\left( a_{i}^{+}\left( \boldsymbol{x} \right) +\delta \right) -\left( a_{i}^{-}\left( \boldsymbol{x} \right) +\delta \right) ,
	\tag{17a}
	\\
%\end{gather}
%where 
%\begin{gather}
	a_{i}^{+}\left( \boldsymbol{x} \right) \cdot a_{i}^{-}\left( \boldsymbol{x} \right) =0,
	\tag{17b}
	\\
	a_{i}^{+}\left( \boldsymbol{x} \right),a_{i}^{-}\left( \boldsymbol{x} \right) >0, \tag{17c}
	\\
	\delta \rightarrow 0^+.
	\tag{17d}
\end{gather}
The complementary condition (17b) can be linearized using the big-M method \cite{jingxiang2024V2G}.

Then, due to $a_{i}^{+}\left( \boldsymbol{x} \right) +\delta >0$ and $a_{i}^{-}\left( \boldsymbol{x} \right) +\delta>0$,
the original conditions in DRJCC (14) are equal to 
\begin{align*}
	a_{i}^{}\left( \boldsymbol{x} \right) \boldsymbol{e}^T\boldsymbol{\xi} \le b_i\left( \boldsymbol{x} \right) 
	\Leftrightarrow \begin{cases}
		\boldsymbol{e}^T\boldsymbol{\xi} \le \frac{b_i\left( \boldsymbol{x} \right)}{a_{i}^{+}\left( \boldsymbol{x} \right) +\delta}\\
		\boldsymbol{e}^T\boldsymbol{\xi} \ge -\frac{b_i\left( \boldsymbol{x} \right)}{a_{i}^{-}\left( \boldsymbol{x} \right) +\delta}.\tag{18}
	\end{cases}
\end{align*}

Therefore, the DRJCC (14) can be rewritten as
\begin{align*}
\underset{\mathbb{P} _{\xi}\in \mathcal{D}}{\mathrm{inf}}\,\,\mathbb{P} _{\boldsymbol{\xi }}\left\{ \begin{array}{c}
	\boldsymbol{e}^T\boldsymbol{\xi }\le \frac{b_i\left( \boldsymbol{x} \right)}{a_{i}^{+}\left( \boldsymbol{x} \right) +\delta}\\
	\boldsymbol{e}^T\boldsymbol{\xi }\ge -\frac{b_i\left( \boldsymbol{x} \right)}{a_{i}^{-}\left( \boldsymbol{x} \right) +\delta}\\
\end{array},\forall i\in \left[ N \right] \right\} \ge 1-\epsilon 
\tag{19a}
\\
\Leftrightarrow \underset{\mathbb{P} _{\xi}\in \mathcal{D}}{\mathrm{inf}}\,\,\mathbb{P} _{\boldsymbol{\xi }}\left\{ \begin{array}{c}
	\boldsymbol{e}^T\boldsymbol{\xi }\le \underset{i\in [N]}{\inf}\left\{ \frac{b_i\left( \boldsymbol{x} \right)}{a_{i}^{+}\left( \boldsymbol{x} \right) +\delta} \right\}\\
	\boldsymbol{e}^T\boldsymbol{\xi }\ge \underset{i\in [N]}{\sup}\left\{ -\frac{b_i\left( \boldsymbol{x} \right)}{a_{i}^{-}\left( \boldsymbol{x} \right) +\delta} \right\}\\
\end{array} \right\} \ge 1-\epsilon . \tag{19b}
\end{align*}

Using Bonferroni approximation on the two inequalities in (19b), can get that 
\vspace{-5pt}
\begin{align*}
	&\underset{\mathbb{P} _{\xi}\in \mathcal{D}}{\mathrm{inf}}\,\,\mathbb{P} _{\boldsymbol{\xi }}\left\{ \boldsymbol{e}^T\boldsymbol{\xi }\le \underset{i}{\inf}\left\{ \frac{b_i\left( \boldsymbol{x} \right)}{a_{i}^{+}\left( \boldsymbol{x} \right) +\delta} \right\} \right\} \ge 1-\frac{\epsilon}{2},\, \tag{20a}
	\\
	&\underset{\mathbb{P} _{\xi}\in \mathcal{D}}{\mathrm{inf}}\,\,\mathbb{P} _{\boldsymbol{\xi }}\left\{ \boldsymbol{e}^T\boldsymbol{\xi }\ge \underset{i}{\sup}\left\{ -\frac{b_i\left( \boldsymbol{x} \right)}{a_{i}^{-}\left( \boldsymbol{x} \right) +\delta} \right\} \right\} \ge 1-\frac{\epsilon}{2}.\,
	\tag{20b}
\end{align*}

\begin{proposition}
For chance constraints (21a) and (21b), 
the following equivalent relationship holds
%【加假设】
\begin{align*}
	&\underset{\mathbb{P} _{\xi}\in \mathcal{D}}{\mathrm{inf}}\,\,\mathbb{P} _{\boldsymbol{\xi }}\left\{ \boldsymbol{e}^T\boldsymbol{\xi }\le \underset{i\in[N]}{\inf}\left\{ \frac{b_i\left( \boldsymbol{x} \right)}{a_{i}^{+}\left( \boldsymbol{x} \right) +\delta} \right\} \right\} \ge 1-\frac{\epsilon}{2}\,\, 
	\\
	&\Leftrightarrow \underset{\mathbb{P} _{\xi}\in \mathcal{D}}{\mathrm{inf}}\,\,\mathbb{P} _{\boldsymbol{\xi }}\left\{ \boldsymbol{e}^T\boldsymbol{\xi }\le \frac{b_i\left( \boldsymbol{x} \right)}{a_{i}^{+}\left( \boldsymbol{x} \right) +\delta} \right\} \ge 1-\frac{\epsilon}{2}, \forall i\in \left[ N \right] ;
	\tag{21a}
	\\
	&\underset{\mathbb{P} _{\xi}\in \mathcal{D}}{\mathrm{inf}}\,\,\mathbb{P} _{\boldsymbol{\xi }}\left\{ \boldsymbol{e}^T\boldsymbol{\xi }\ge \underset{i\in[N]}{\sup}\left\{ -\frac{b_i\left( \boldsymbol{x} \right)}{a_{i}^{-}\left( \boldsymbol{x} \right) +\delta} \right\} \right\} \ge 1-\frac{\epsilon}{2}\,\, 
	\\
	&\Leftrightarrow \underset{\mathbb{P} _{\xi}\in \mathcal{D}}{\mathrm{inf}}\,\,\mathbb{P} _{\boldsymbol{\xi }}\left\{ \boldsymbol{e}^T\boldsymbol{\xi }\ge -\frac{b_i\left( \boldsymbol{x} \right)}{a_{i}^{-}\left( \boldsymbol{x} \right) +\delta} \right\} \ge 1-\frac{\epsilon}{2}, \forall i\in \left[ N \right] ;
	\tag{21b}
\end{align*}
\end{proposition}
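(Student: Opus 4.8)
The plan is to exploit the fact that, once the decision vector $\boldsymbol{x}$ is fixed, $\boldsymbol{e}^T\boldsymbol{\xi}$ is a \emph{scalar} random variable and each quantity $c_i:=b_i(\boldsymbol{x})/(a_i^+(\boldsymbol{x})+\delta)$ is a genuine constant, so the sublevel sets $\{\boldsymbol{e}^T\boldsymbol{\xi}\le c_i\}=(-\infty,c_i]$ are half-lines and hence totally ordered by inclusion. First I would note that both sides of (21a) are constraints on the same $\boldsymbol{x}$, so it suffices to establish the equivalence for an arbitrary fixed $\boldsymbol{x}$. Since $[N]$ is finite, $\inf_{i\in[N]}c_i=\min_{i\in[N]}c_i=c_{i^\star}$ is attained at some index $i^\star$. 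Then, for every realization of $\boldsymbol{\xi}$, the event $\{\boldsymbol{e}^T\boldsymbol{\xi}\le\min_i c_i\}$ equals $\bigcap_{i\in[N]}\{\boldsymbol{e}^T\boldsymbol{\xi}\le c_i\}$, and because the half-lines are nested this intersection is exactly $(-\infty,c_{i^\star}]=\{\boldsymbol{e}^T\boldsymbol{\xi}\le c_{i^\star}\}$. This set identity holds measure-by-measure, in particular for every $\mathbb{P}_{\boldsymbol{\xi}}\in\mathcal{D}$.

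Next I would transfer this to the worst-case probabilities. For the forward direction, $c_{i^\star}\le c_i$ gives $\{\boldsymbol{e}^T\boldsymbol{\xi}\le\min_i c_i\}\subseteq\{\boldsymbol{e}^T\boldsymbol{\xi}\le c_i\}$, hence $\mathbb{P}_{\boldsymbol{\xi}}(\boldsymbol{e}^T\boldsymbol{\xi}\le\min_i c_i)\le\mathbb{P}_{\boldsymbol{\xi}}(\boldsymbol{e}^T\boldsymbol{\xi}\le c_i)$ for every $\mathbb{P}_{\boldsymbol{\xi}}\in\mathcal{D}$, and taking the infimum over $\mathcal{D}$ preserves the inequality; thus the left-hand chance constraint implies each of the $N$ right-hand ones. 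For the reverse direction, I would invoke the right-hand family at the single index $i^\star$: by the set identity above, $\inf_{\mathbb{P}_{\boldsymbol{\xi}}\in\mathcal{D}}\mathbb{P}_{\boldsymbol{\xi}}(\boldsymbol{e}^T\boldsymbol{\xi}\le\min_i c_i)=\inf_{\mathbb{P}_{\boldsymbol{\xi}}\in\mathcal{D}}\mathbb{P}_{\boldsymbol{\xi}}(\boldsymbol{e}^T\boldsymbol{\xi}\le c_{i^\star})\ge 1-\epsilon/2$, which is precisely the left-hand constraint. This proves the equivalence in (21a).

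Finally, (21b) follows by the symmetric argument: writing $c_i':=-b_i(\boldsymbol{x})/(a_i^-(\boldsymbol{x})+\delta)$, the superlevel sets $\{\boldsymbol{e}^T\boldsymbol{\xi}\ge c_i'\}=[c_i',\infty)$ are again nested, $\sup_{i\in[N]}c_i'=\max_{i\in[N]}c_i'=c_{j^\star}'$ is attained, and $\{\boldsymbol{e}^T\boldsymbol{\xi}\ge\max_i c_i'\}=\bigcap_{i\in[N]}\{\boldsymbol{e}^T\boldsymbol{\xi}\ge c_i'\}=[c_{j^\star}',\infty)$; repeating the monotonicity/attainment steps with the inequalities reversed yields the stated equivalence.

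I expect the only delicate point to be bookkeeping rather than mathematics. One must be careful that the equivalence is claimed for a fixed decision $\boldsymbol{x}$ (so that the $c_i$ are constants and the half-line picture is legitimate), that the infimum over the finite index set is attained so the reverse implication has an explicit index to invoke, and that the set identity holds distribution-by-distribution so it commutes with the robust infimum over $\mathcal{D}$. The scalar nature of $\boldsymbol{e}^T\boldsymbol{\xi}$ is the structural fact doing all the work — it is what forces the sublevel (and superlevel) sets into a chain — and no property of the ambiguity set beyond its being a fixed family of measures is needed.
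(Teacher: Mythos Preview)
Your argument is correct and rests on the same idea as the paper's proof: because $\boldsymbol{e}^T\boldsymbol{\xi}$ is a scalar, the constraint events are totally ordered by inclusion, so the joint event coincides with the single most restrictive one. The only presentational difference is that the paper encodes this monotonicity in an auxiliary worst-case distribution function $f_{\mathcal{D}}(v):=\inf_{\mathbb{P}_{\boldsymbol{\xi}}\in\mathcal{D}}\mathbb{P}_{\boldsymbol{\xi}}\{\boldsymbol{e}^T\boldsymbol{\xi}\le v\}$ (citing a reference for its monotonicity) and then compares thresholds against $F_0=f_{\mathcal{D}}^{-1}(1-\epsilon/2)$, whereas you work directly with the nested half-lines and the attained minimizing index $i^\star$; your route is slightly more self-contained since the needed monotonicity follows immediately from set inclusion and is preserved under the infimum over $\mathcal{D}$.
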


\begin{proof}
	First, we prove that the condition (20a) equals (21a).
Denote the probability bound function under ambiguity set $\mathcal{D}$ as 
\begin{align*}
	f_{\mathcal{D}}\left( v \right) :=\underset{\mathbb{P} _{\boldsymbol{\xi }}\in \mathcal{D}}{\mathrm{inf}}\,\,\mathbb{P} _{\boldsymbol{\xi }}\left\{ \boldsymbol{e}^T\boldsymbol{\xi }\le v \right\} ,\tag{22}
\end{align*}
where $f_{\mathcal{D}}\left( v \right)$ is monotonically increasing with respect to $v \in \mathbb{R}$ \cite{bartp.g.2019Distributionally}. 
Correspondingly, the inverse function of $f_{\mathcal{D}}\left( v \right)$ can be expressed as 
%\cite{chao2022LiLun}
\begin{align*}
	f_{\mathcal{D}}^{-1}\left( 1-\epsilon \right) =\underset{v}{\mathrm{sup}}\left\{ f_{\mathcal{D}}\left( v \right) \ge 1-\epsilon \right\} .
\end{align*}

Through the definition of $f_{\mathcal{D}}\left( v \right)$,
constraints (20a) and (21a) can be rewritten as follows, respectively
\begin{align*}
	f_{\mathcal{D}}\left( \underset{i}{\mathrm{inf}}\left\{ \frac{b_i\left( \boldsymbol{x} \right)}{a_{i}^{+}\left( \boldsymbol{x} \right) +\delta} \right\} \right) \ge 1-\frac{\epsilon}{2}, \tag{23a}
	\\
	f_{\mathcal{D}}\left( \frac{b_i\left( \boldsymbol{x} \right)}{a_{i}^{+}\left( \boldsymbol{x} \right) +\delta} \right) \ge 1-\frac{\epsilon}{2},\forall i\in \left[ N \right].  \tag{23b}
\end{align*}

Denote $F_0$ as the value that $f_{\mathcal{D}}^{-1}\left( 1-\frac{\epsilon}{2} \right) = F_0$, i.e., $f_{\mathcal{D}}\left( F_0 \right) =1-\frac{\epsilon}{2}$. 
Given that $f_{\mathcal{D}}\left( v \right)$ is monotonically increasing, the (24a) and (24b) holds \textit{if and only if}
\begin{align*}
	\left( 23a \right) \Leftrightarrow \underset{i\in \left[ N \right]}{\mathrm{inf}}\left\{ \frac{b_i\left( \boldsymbol{x} \right)}{a_{i}^{+}\left( \boldsymbol{x} \right) +\delta} \right\} \ge F_0; \tag{24a}
	\\
	\left( 23b \right) \Leftrightarrow \frac{b_i\left( \boldsymbol{x} \right)}{a_{i}^{+}\left( \boldsymbol{x} \right) +\delta}\ge F_0, \forall i\in \left[ N \right] .\tag{24b}
\end{align*}

It is obvious that $\mathrm{(24a)}\Leftrightarrow \mathrm{(24b)}$, so there is $\mathrm{(20a)}\Leftrightarrow \mathrm{(23a)}\Leftrightarrow \mathrm{(24a)}\Leftrightarrow \mathrm{(24b)}\Leftrightarrow \mathrm{(23b)}\Leftrightarrow \mathrm{(21a)}$.
It follows that condition (20a) equals to (21a).
%【最后语言说一句，等价】

By defining another probability boundary function as 
\begin{align*}
	f_{\mathcal{D}}^{'}\left( v \right) :=\underset{\mathbb{P} _{\boldsymbol{\xi }}\in \mathcal{D}}{\mathrm{inf}}\,\,\mathbb{P} _{\boldsymbol{\xi }}\left\{ \boldsymbol{e}^T\boldsymbol{\xi }\ge v \right\} ,
\end{align*}
which is monotonically decreasing with respect to $v$. 
Using the similar process from (22) to (24), it can be proved that (20b) equals (21b).

\end{proof}
\vspace{-10pt}
\begin{remark}
	(21a) and (21b) can be recognized as the outer approximation of (20a) and (20b), respectively. 
Generally, the outer approximation directly assigns the violation probability to each individual DRCCs, which can not ensure the risk level of the original DRJCC \cite{weijun2022LiLun}.
However, \textit{Proposition 1} demonstrates that for the one-sided inequalities of $\boldsymbol{\xi}$ like (21a) and (21b), the outer approximation is equivalent.
This equivalence is because all operational re-dispatches under the affine policy follow consistent uncertainties, leading to significant correlation. Consequently, only the most restrictive condition needs to be enforced in the DRJCC.
%	 the exterior approximation is equivalent for linear one-sided inequalities under linear one-sided inequalities about \xi in the form of (21a) and (21b).
%
%	Using general DRJCC (14) as an example, the outer approximation can be expressed as
%	\begin{align*}
%		&Z_O:=\left\{ \boldsymbol{x}:\underset{\mathbb{P} _{\xi}\in \mathcal{D}}{\mathrm{inf}}\,\,\mathbb{P} _{\boldsymbol{\xi }}\left\{ a_i\left( \boldsymbol{x} \right) \boldsymbol{e}^T\boldsymbol{\xi }\le b_i\left( \boldsymbol{x} \right) \right\} \ge 1-\epsilon ,\forall i\in \left[ N \right] \right\} \tag{26a}
%	\end{align*}
%	The outer approximation assigns the total risk level $\epsilon$ to each condition. 
%	This approximation avoids the over-conservation, but can not ensure the original DRJCC (14) always hold, i.e., $Z_O\supseteq Z$. 
%	Importantly, however, for the DRJCC (22a) containing the affine strategy, we proof that its outer approximation (22b) is \textit{equivalent}.
\end{remark}

%【！！！！！】【加Propostion2!】【记得加条件】
%【可以加一个Remark,这也叫outter aprroximation，但我们证明在affine策略下，outer 近似是等价的】
\begin{proposition}
%Combining reformulation in (18)-(21) and Proposition 1, 
For the DRJCC with the affine policy as the form (14), sufficient approximation conditions can be obtained as
%a approximated necessary condition for DRJCC (14a) can be obtained
\begin{align*}
%	&\underset{\mathbb{P} _{\xi}\in \mathcal{D}}{\mathrm{inf}}\,\,\mathbb{P} _{\boldsymbol{\xi }}\left\{ \left( a_{i}^{+}\left( \boldsymbol{x} \right) +\delta \right) \boldsymbol{e}^T\boldsymbol{\xi }\le b_i\left( \boldsymbol{x} \right) \right\} \ge 1-\frac{\epsilon}{2},\forall i\in \left[ N \right] ,\tag{26a}
%	\\
%	&\underset{\mathbb{P} _{\xi}\in \mathcal{D}}{\mathrm{inf}}\,\,\mathbb{P} _{\boldsymbol{\xi }}\left\{ \left( a_{i}^{-}\left( \boldsymbol{x} \right) -\delta \right) \boldsymbol{e}^T\boldsymbol{\xi }\le b_i\left( \boldsymbol{x} \right) \right\} \ge 1-\frac{\epsilon}{2},\forall i\in \left[ N \right] ,
%	\tag{26b}
%	\\
%	&(18a)-(18d).
&\begin{cases}
	\underset{\mathbb{P} _{\xi}\in \mathcal{D}}{\mathrm{inf}}\,\,\mathbb{P} _{\boldsymbol{\xi }}\left\{ \left( a_{i}^{+}\left( \boldsymbol{x} \right) +\delta \right) \boldsymbol{e}^T\boldsymbol{\xi }\le b_i\left( \boldsymbol{x} \right) \right\} \ge 1-\frac{\epsilon}{2}, \forall i\in \left[ N \right]\;\,\mathrm{(25a)}\\
	\underset{\mathbb{P} _{\xi}\in \mathcal{D}}{\mathrm{inf}}\,\,\mathbb{P} _{\boldsymbol{\xi }}\left\{ \left( a_{i}^{-}\left( \boldsymbol{x} \right) -\delta \right) \boldsymbol{e}^T\boldsymbol{\xi }\le b_i\left( \boldsymbol{x} \right) \right\} \ge 1-\frac{\epsilon}{2}, \forall i\in \left[ N \right]\;\,\mathrm{(25b)}\\
	(17a)-(17d).\\
\end{cases}
\end{align*}
\end{proposition}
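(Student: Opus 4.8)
The plan is to assemble \emph{Proposition 2} from ingredients that are already in place: the sign decomposition (17a)--(17d), the one-sided reformulation (18), the collapse of the joint event into an $\inf/\sup$, the Bonferroni split of the risk, and \emph{Proposition 1}. I would present it as a chain of implications/equivalences running from the compact DRJCC (14) down to (25a)--(25b), so that each link is either an identity of events, a union bound, or a citation.

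First I would record the decomposition step. Fix $\boldsymbol{x}$ and fix $i$. Relations (17b)--(17d) force $a_i^{+}(\boldsymbol{x})+\delta>0$, $a_i^{-}(\boldsymbol{x})+\delta>0$, $a_i^{+}(\boldsymbol{x})\,a_i^{-}(\boldsymbol{x})=0$, and $a_i(\boldsymbol{x})=(a_i^{+}(\boldsymbol{x})+\delta)-(a_i^{-}(\boldsymbol{x})+\delta)$. Dividing the scalar inequality $a_i(\boldsymbol{x})\boldsymbol{e}^{\mathrm T}\boldsymbol{\xi}\le b_i(\boldsymbol{x})$ by the two positive denominators yields exactly the two half-line events of (18); whichever of $a_i^{\pm}(\boldsymbol{x})$ vanishes makes the corresponding bound $\pm b_i(\boldsymbol{x})/\delta$ blow up and become vacuous as $\delta\to 0^{+}$, so no information is lost provided the margin $b_i(\boldsymbol{x})\ge 0$. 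Substituting these set identities into (14) gives (19a), and since $\boldsymbol{e}^{\mathrm T}\boldsymbol{\xi}$ is a single scalar, the intersection over $i$ of the upper bounds is the single event $\{\boldsymbol{e}^{\mathrm T}\boldsymbol{\xi}\le\inf_{i} b_i(\boldsymbol{x})/(a_i^{+}(\boldsymbol{x})+\delta)\}$, and likewise for the lower bounds, which is (19b).

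Next I would apply the union bound. For any $\mathbb{P}\in\mathcal{D}$ and events $A,B$, $\mathbb{P}(A\cap B)\ge 1-\mathbb{P}(A^{\mathrm c})-\mathbb{P}(B^{\mathrm c})$; taking $A,B$ to be the two half-line events of (19b) and demanding that each holds with confidence $1-\epsilon/2$ uniformly over $\mathcal{D}$, i.e.\ (20a)--(20b), therefore implies (19b) at level $1-\epsilon$, hence (14). Then \emph{Proposition 1} converts each of (20a), (20b) into its per-$i$ form (21a), (21b) --- this is the link that actually exploits the affine-policy correlation, moving the worst-case-over-$i$ out of the probability into the family $\forall i\in[N]$. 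Finally, multiplying the inner inequality of (21a) by $a_i^{+}(\boldsymbol{x})+\delta>0$ and that of (21b) by $a_i^{-}(\boldsymbol{x})+\delta>0$ preserves the inequality direction and leaves the underlying event unchanged, rewriting them as the linear-in-$\boldsymbol{\xi}$ inequalities (25a) and (25b). Retaining (17a)--(17d), with the complementarity (17b) linearised by big-$M$ as in the text, yields the stated sufficient conditions.

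I expect the only delicate point to be the bookkeeping in the decomposition (18): making the ``vacuous direction'' argument rigorous needs $b_i(\boldsymbol{x})\ge 0$ (guaranteed when the day-ahead schedule leaves a nonnegative feasibility margin) together with a careful $\delta\to 0^{+}$ limit, and it is exactly this that turns an otherwise exact reformulation into a conservative inner approximation. The remaining pieces --- the union bound, the collapse of the joint event into $\inf/\sup$, the invocation of \emph{Proposition 1}, and the sign-preserving multiplication --- are routine once the conventions for $a_i^{\pm}(\boldsymbol{x})$ and $\delta$ are pinned down.
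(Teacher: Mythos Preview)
Your proposal is correct and follows essentially the same route as the paper, which proves Proposition~2 in one line by ``integrating transformations in (18)--(20) and \emph{Proposition 1}'': you have simply unpacked that chain (sign decomposition $\to$ (18) $\to$ (19a)/(19b) $\to$ Bonferroni split (20a)/(20b) $\to$ Proposition~1 $\to$ multiply back through the positive denominators). Your added caveat about needing $b_i(\boldsymbol{x})\ge 0$ so that the vacuous side of (18) is genuinely vacuous as $\delta\to 0^{+}$ is a legitimate subtlety that the paper glosses over, and note that your multiplication step for (21b) actually yields $-(a_i^{-}(\boldsymbol{x})+\delta)\boldsymbol{e}^{\mathrm T}\boldsymbol{\xi}\le b_i(\boldsymbol{x})$, which differs in sign from the paper's stated (25b); this appears to be a typo in the paper rather than an error on your part.
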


\begin{proof}
By integrating transformations in (18)-(20) and \textit{Proposition} 1, the conclusion in \textit{Proposition} 2 can be easily proofed.
\end{proof}

Utilizing \textit{Proposition} 2, the DRJCC is effectively transformed into several individual DRCCs. 
For clarity in subsequent analyses, denote the feasible set of original DRJCC, the average Bonferroni approximation, and the proposed method as
$Z:=\left\{ \boldsymbol{x}:\left( 14 \right) \right\},Z_{B}:=\left\{ \boldsymbol{x}:(15) \right\},	Z_{p}:=\left\{ \boldsymbol{x}:(25),(17) \right\}$, respectively. Also, represent the feasible set of outer approximation of the DRJCC (14) as   
\begin{align*}
	Z_O:=\left\{ \boldsymbol{x}:\inf _{\mathbf{P} \in \mathcal{P}} \mathbb{P}\left\{a_i(x) \boldsymbol{e}^T\xi \leq b_i(x)\right\} \geq 1-{\epsilon}, \forall i \in[N] \right\}.
\end{align*}

\begin{remark}
	By integrating the specificity of the affine policy, the proposed DRJCC transformation method achieves a more rational allocation of violation probabilities $\epsilon_i$, thus reducing the over-conservation and enhancing the operational economic efficiency. 
More rigorous, there is $Z_O\supseteq Z\supseteq Z_{p}\supseteq Z_{B}$, when $N>2$ that is easily achievable.
Especially, when the signs of all $a_i(\boldsymbol{x})$ can be determined in advance, there is $Z_O=Z=Z_p \supseteq Z_{B}$.

\end{remark}

\subsubsection{Reformulation of individual DRCCs}

%In the last part, the DRJCC (14) are transformed into several individual DRCCs (25a)-(25b) by \textit{Proposition} 2.

Under the guidance of \textit{Proposition} 2, DRJCC as formulated in (14) are efficiently decomposed into several individual DRCCs (25a)-(25b).
According to the Theorem 2 in \cite{chao2022LiLun}, individual DRCCs under the $\alpha$-unimodal ambiguity set (10) like form (26) can be equivalently transformed into (27a)-(27b) as
%For individual DRCCs, inspired by Theorem 2 in \cite{chao2022LiLun}, under the $\alpha$-unimodal ambiguity set (10), individual DRCCs like form (27) can be equivalently transformed into (28a)-(28b) as
\begin{align*}
&\underset{\mathbb{P} _{\xi}\in \mathcal{D}}{\mathrm{inf}}\,\,\mathbb{P} _{\boldsymbol{\xi }}\left\{ a_i\left( \boldsymbol{x} \right) \boldsymbol{e}^T \boldsymbol{\xi}\le b_i\left( \boldsymbol{x} \right) \right\} \ge 1-\epsilon_i
\tag{26}
\\
&\Leftrightarrow b_i(\boldsymbol{x})\ge r_i\left\| \mathbf{\Sigma }_{0}^{1/2}a_i(\boldsymbol{x}) \right\| _2,
\tag{27a}
\\
&\quad \;\, r_i=\left\{ \begin{array}{l}
	\sqrt{\frac{\gamma _2}{\epsilon_i}}\left( \frac{\eta}{2}+1 \right) ^{-1/\eta},\quad \,\,\mathrm{if} \frac{\gamma _2}{\gamma _{1}^{\prime}}\left( \frac{\eta +2}{\eta +1} \right) ^2<\frac{1}{\epsilon} \tag{27b}
	\\
	\frac{\sqrt{\eta _{2}^{2}+\eta (\eta +2)\left( 1/\epsilon_i -1 \right) \eta _{1}^{2}}+(\eta +1)\eta _2}{(\eta +2)\tau}, \mathrm{otherwise}\\
\end{array} \right. ,
\end{align*}
where $\gamma_1^{\prime}=\min \left\{\gamma_1, \frac{\eta(\eta+2)}{(\eta+1)^2} \gamma_2\right\}$,
$\eta_1=\sqrt{\frac{\eta+2}{\eta} \gamma_2-\left(\frac{\eta+1}{\eta}\right)^2 \gamma_1^{\prime}}$,
$\eta_2\!=\!\frac{\eta+1}{\eta} \sqrt{\gamma_1^{\prime}}$, 
and 

$\tau =\left\{ 1-\epsilon _i-\frac{\left( \sqrt{\epsilon _i\eta _{2}^{2}+\eta (\eta +2)\left( 1-\epsilon _i \right) \eta _{1}^{2}}-\sqrt{\epsilon _i\eta _{2}^{2}} \right) ^2}{(\eta +2)^2\eta _{1}^{2}} \right\} ^{-1/\eta}$ .

Constraints (27a)-(27b) are SOC, which can be efficiently implemented by commercial solvers.
\vspace{-5pt}
\subsection{Reformulation of Nonconvex HCNG Constraints}

%出了DRJCC，(4b),(4l),(4m),(4o),(4p)/ (4n)(4q)
%Unlike typical gas flow constraints, the introduction of HVF elevates the order of the formulation.
%The new quadratic and cubic terms in HCNG flow constraints introduce further challenges in E-HCNG scheduling.
%In this part, the bilinear terms with HVF are first reformulated.
%Then, the intractable cubic HNCG flow constraints are addressed.

%%raise the order of constraints.
%This makes the bilinear and trilinear terms in HCNG flow equations to be another intractable problem for the solution. 
%In this part, the nonconvexity caused by HVF are first reformulated.
%Then, the other intracble terms are transformed.%【这句话要重说】

\subsubsection{Reformulation of Bilinear Terms with HVF}
%HVF提升了维度。尽管商业求解器支持，但是效率过低。目前开发了多种处理办法：McCormick松弛，迭代凸近似。

The HVF variable $w$ introduces quadratic and cubic terms in equations (4b), (4l), (4o), and (4p), resulting in the nonconvex high-order equality constraints.
Although commercial solvers like Gurobi can yield exact solutions for general quadratic equality constraints, their spatial branching algorithm is extremely time-consuming when dealing with multiple such constraints.
%To tackle the above issue, several transformation methods are employed in energy scheduling, such as McCormick envelope relaxation \cite{lirong2021GaiJinMcCormickSongChi}, SOS2 approximation \cite{mahdi2020YingYongWenZhang}, binary expansion approximation (BEA) \cite{rui2019Participation}, and iterative convex approximation \cite{lysandros2024LianXuTuJinSi}.
In E-HCNG networks, the HVF $w_t$ should remain consistent across expressions to ensure uniform operations of different components.
Therefore, the BEA is adopted for HVF $w_t$ \cite{rui2019Participation}. 
%
%
%一个特殊的问题是HVF是高度耦合的。
%Specially, the constraints with HVF are numerous within the operational model. Not only that, the values of HVF $w$ need to be consistent to keep the validity of scheduling.
%This makes the McCormick envelope relaxation inapplicable due to its inability to satisfy the consistency of the HVF, and the iterative convex relaxation methods face a high computational burden.
%In particular, the operational model contains numerous constraints with HVF, and the HVF values $w$ must remain consistent to ensure scheduling validity. 
%Consequently, the McCormick envelope relaxation is unsuitable as it cannot ensure HVF consistency, and iterative convex relaxation methods are computationally burdensome.
%Therefore, the BEA for HVF $w$ was introduced, which can linearized the bilinear term with HVF. 
Specifically, for $w_t\in[0,w^U]$, it can be approximated as 
\begin{align*}
	w_t=\bigtriangleup w\sum_{k=1}^K{2^{k-1}w_{t,k}^{B}} ,\tag{28}
\end{align*}
where $\bigtriangleup w=w^U/2^K$, and $K$ is a positive integer that controls the precision. $w_{k}^{B}$ is the binary variable.

%For bilinear terms in form as $xw$,  can be expressed as 
%\begin{align*}
%	xw=\bigtriangleup w\sum_{k=1}^K{2^{k-1}xw_{k}^{B}}
%\end{align*}
%where the production of continuous and binary variable $xw_{k}^{B}$ can be equivalently linearized by introducing an auxiliary variable through big-M method as follows \cite{}:

So that the bilinear terms in the form $xw_t$ can be approximated by the auxiliary variable $z_t$, with the following constraints:
\vspace{-5pt}
\begin{align*}
	z_t&=\bigtriangleup w\sum_{k=1}^K{2^{k-1}z_{t,k}}, \tag{29a}
	\\
	-M\left( 1-w_{t,k}^{B} \right) &\le x-z_{t,k}\le M\left( 1-w_{t,k}^{B} \right), \tag{29b}
	\\
	-Mw_{t,k}^{B}&\le z_{t,k}\le Mw_{t,k}^{B}, \tag{29c}
\end{align*}
where the continuous variable $z_{t,k} = xw_{t,k}^{B}$, and $M$ is a sufficient large number. Through constraints (29a)-(29c), the bilinear terms in equations (4l), (4o), and (4p) can be linearized.

\subsubsection{Reformulation of Nonconvex HCNG Flow}
%Even though the bilinear terms are linearized by BEA, 
In the HCNG flow equations (4b), (9b), and (9c), the flow coefficient $C_{mn,t}$ varies inversely with the HVF $w_t$, as referenced in (4m) and (4n).
This results in intractable nonconvex cubic equality constraints in the HNCG flow. 
Furthermore, the value of $sgn$ can be linearized using the big-M method without destroying the formula \cite{yangIESDiaoPinDROJiHuiYueShuDiaoDu2022}.

%In this part, first, (4b) and (9b) are transformed to tackle the cubic terms. Later on, the trilinear terms in (9c) are tackled. 
First, the transformations of equations (4b) and (9b) are presented.
Due to their similar forms, (4b) is used as an example.
Given the nonconvex high-order equality constraints of (4b), 
the convex-concave procedure is employed, equally transforming the (4b) as follows \cite{cheng2018Convexa}:
\vspace{-5pt}
%For ease of expression, denote that
%\begin{align*}
%	C_{mn,t}=C_{mn}^{0}\frac{M^{0}}{M_t} \tag{31}
%\end{align*}
%where $M_t$ is linear function about $w_t$.
%
%
%In addition to the bilinear in (4l), (4o), and (4p), 
%The nonconvexity of HCNG energy flow equation (4b) and the corresponding affine policy equations (9b),(9c) are also intractable. 
%
%Since the hydrogen blending, $C_{mn,t}$ becomes a variable of HVF $w_t$.
%For ease of expression, denote that
%\begin{align*}
%	C_{mn,t}=C_{mn}^{0}\frac{M^{0}}{M_t} \tag{31}
%\end{align*}
%where $\frac{1}{C_{mn,t}}=\frac{M_t}{C_{mn,t}^{0}M_{t}^{0}}\propto w_t$.
%So constraints (4b), (9b), and (9c) are trilinear equations. 
%
%First, the transformations of equations of (4b) and (9b) are presented.
%Due to their similar forms, (4b) is introduced as an example.
%【上面这句话需要打磨】
%
%Inspired by the general gas scheduling work, using the concave process in \cite{}, \cite{}, the equality constraint (4b) is equivalent to following opposite constraints
\begin{align*}
	\frac{{F_{mn,t}}^2}{C_{mn,t}}+{\pi _{n,t}^{}}^2&\le {\pi _{m,t}^{}}^2,
	\tag{30a}
	\\
	{\pi _{m,t}^{}}^2 &\le \frac{{F_{mn,t}}^2}{C_{mn,t}}+{\pi _{n,t}^{}}^2.
	\tag{30b}
\end{align*}
For clarity, denote that
\vspace{-7pt}
\begin{align*}
	C_{mn,t}=C_{mn}^{0}\frac{M^{0}}{M_t}, \tag{31}
\end{align*}
where $M_t$ is linear function about $w_t$.
So, for constraint (30a), the first term on the left-hand side can be rewritten as
\begin{align*}
	\frac{{F_{mn,t}}^2}{C_{mn,t}}=\frac{{F_{mn,t}}^2}{C_{mn}^{0}}\cdot \frac{M_t}{M^{0}}=\frac{1}{C_{mn}^{0}}\left( F_{mn,t}\sqrt{\frac{M_t}{M^{0}}} \right) ^2. \tag{32}
\end{align*}
Considering that the maximum HVF is limited to 20\% in most practices \cite{chen2020GuanDaoXianQing1}, and taking into account the values of $M^{Gas}$ and $M^{Hy}$, the value of $\sqrt{\frac{M_t}{M^{0}}}$ is closed to 1.
So, by the first-order Taylor expansion, equation (33) can be approximated as
\begin{align*}
	\frac{{F_{mn,t}}^2}{C_{mn,t}}=\frac{1}{C_{mn}^{0}}\left( F_{mn,t}\sqrt{\frac{M_t}{M^0}} \right) ^2\approx \frac{1}{C_{mn}^{0}}\left[ F_{mn,t}\left( \frac{M_t}{2M^0}+\frac{1}{2} \right) \right] ^2. \tag{33}
\end{align*}
Since $\frac{M_t}{2M^0}$ is a linear function about $w_t$, the expression $F_{mn,t}\left( \frac{M_t}{2M^0}+\frac{1}{2} \right)$ can be linearized by the BEA with auxiliary variable $z^{F_1}_{mn,t}$, as introduced in (29a)-(29c).
Following this, constraint (31a) is transformed into the SOC constraint as 
\vspace{-2pt}
\begin{align*}
	\frac{{z_{mn,t}^{F}}^2}{C_{mn,t}^{0}}+{\pi _{n,t}^{}}^2\le {\pi _{m,t}^{}}^2 .\tag{34}
\end{align*}

%For the constraint (32b), the linear approximation for right-hand side is widely employed \cite{}. At given point $(F_{mn,t}^r, \pi_{n,t}^r)$ , the approximation can be expressed as
For constraint (31b), the linear approximation for the right-hand side can be expressed as
\begin{align*}
	\frac{{F_{mn,t}}^{2}}{C_{mn,t}^{}}+{\pi _{n,t}}^{2}\approx \frac{2F_{mn,t}^{r}F_{mn,t}-\left( F_{mn,t}^{r} \right) ^2}{C_{mn,t}^{}}+2\pi _{n,t}^{r}\pi _{n,t}-\left( \pi _{n,t}^{r} \right) ^2
	\\
	=\frac{2F_{mn,t}^{r}F_{mn,t}}{C_{mn}^{0}}\cdot \frac{M_t}{M^0}-\frac{\left( F_{mn,t}^{r} \right) ^2}{C_{mn}^{0}}\cdot \frac{M_t}{M^0}+2\pi _{n,t}^{r}\pi _{n,t}-\left( \pi _{n,t}^{r} \right) ^2, \tag{35}
\end{align*}
where $(F_{mn,t}^r, \pi_{n,t}^r)$ is reference point of linear approximation.
$F_{mn,t}\frac{M_t}{M^0}$ can be linearized by the BEA with auxiliary variable $z^{F_2}_{mn,t}$. And $\frac{M_t}{M^0}$ can be approximated through (28) with the auxiliary variable $k^{M}_{t}$.
After that, (30b) can be approximated by
\begin{align*}
	{\pi _{m,t}^{}}^2\le \frac{2F_{mn,t}^{r}F_{mn,t}z_{mn,t}^{F_2}-\left( F_{mn,t}^{r} \right) ^2k_{t}^{M}}{C_{mn}^{0}}+2\pi _{n,t}^{r}\pi _{n,t}-\left( \pi _{n,t}^{r} \right) ^2. \tag{36}
\end{align*}

\vspace{-8pt}
Following this, the nonconvex cubic equality constraint (4b) is replaced by (34), (35), and(36), along with corresponding constraints for auxiliary variables $z^{F_1}_{mn,t}$, $z^{F_2}_{mn,t}$, and $k^{M}_{t}$ as detailed in (28)-(29c).
The similar approximation can be applied to constraint (9b). 

Afterward, constraint (9c) is the last constraint to be transformed. 
First, for the bilinear term $\alpha_{mn,t}F_{mn,t}$, $\alpha_{m,t}^\pi\pi_{m,t}$, and $\alpha_{n,t}^\pi\pi_{n,t}$, the McCormick envelope approximation is employed \cite{lirong2021GaiJinMcCormickSongChi}, 
introducing the auxiliary variables $z_{mn,t}^{\alpha F}$, $z_{m,t}^{\alpha \pi}$, and $z_{n,t}^{\alpha \pi}$.
In combination with (31), constraint (9c) can be rewritten as
\vspace{-3pt}
\begin{align*}
	z_{mn,t}^{\alpha F}\frac{M_t}{M^0}=C_{mn,t}^{0}\left( z_{m,t}^{\alpha \pi}-z_{n,t}^{\alpha \pi} \right), \tag{38}
\end{align*}

\vspace{-3pt}
\noindent where $z_{mn,t}^{\alpha F}\frac{M_t}{M^0}$ can also be linearized by (29a)-(29c),
with the corresponding auxiliary variable $z^{\alpha F_1}_{mn,t}$. 

Following the above transformation, the formulation of the E-HNCG scheduling is transformed into a mixed-integer second-order cone programming problem, which can be effectively solved by commercial optimization solvers.

%【最后总结一下最后转化成什么，怎么好了】
\vspace{-5pt}
\section{Case Studies}
The proposed method is tested on two systems: one with a 5-bus power network and a 7-node gas network, the other with a 118-bus power network and a 20-node gas network. 
The time horizon of the scheduling $T = 24$. 
The maximum HVF is set to 20\% \cite{chen2020GuanDaoXianQing1}.
The parameters $\gamma_1$, $\gamma_2$, and $\eta$ in (4) are set at 0.1, 1.1, and 1, respectively. 
The risk level of DRJCC is set to $\epsilon$ = 0.05. And set $K=3$.
All simulations are coded in Python 3.11 and solved by Gurobi 11.0, on an Intel Core i9-13900 CPU with 64 GB RAM PC. The MIPGap for Gurobi is set to 0.005.

\vspace{-10pt}
\subsection{Small-Scale Test System}

The small-scale test system consisting of a 5-bus power system and a 7-node HCNG system is illustrated in Fig. 1. 
The system includes one non-GFU, two GFUs, two PV stations, two electricity loads, two gas sources, two compressors, three gas loads, and one P2H station with hydrogen storage.
The total system electrical load, gas load, and PV output curves are shown in Figure \ref{fig_load}.
The means and standard deviations of PV total output forecast errors are set at 0 and 4 MW, respectively.
Detailed component parameters are given in \cite{case_data}.

\begin{figure}[!t]
	\centering
	\includegraphics[width=3.2in]{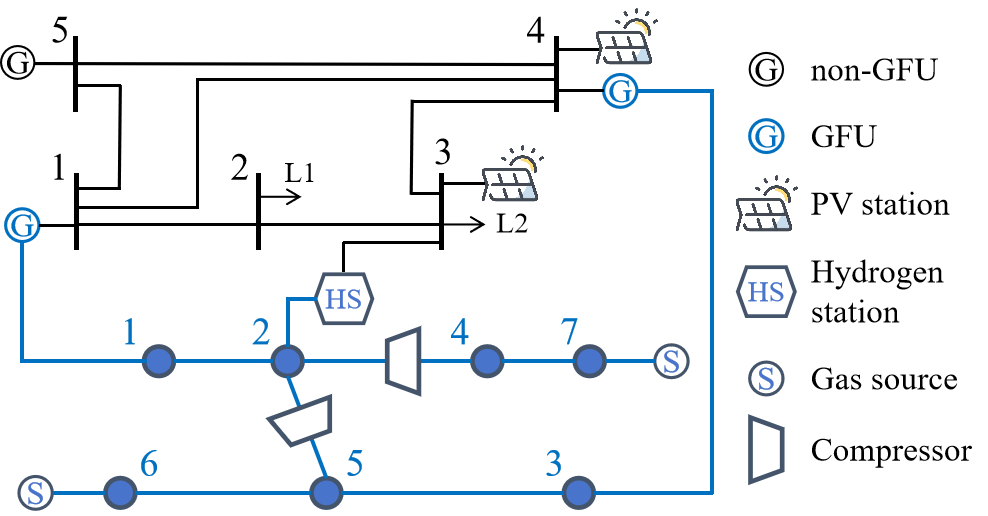}
	\caption{Topology of the small-scale E-HCNG network with a 5-bus power grid and a 7-node HCNG network.}
	\vspace{-12pt}
	\label{fig_load}
\end{figure}

\begin{figure}[!t]
	\centering
	\includegraphics[width=3.5in]{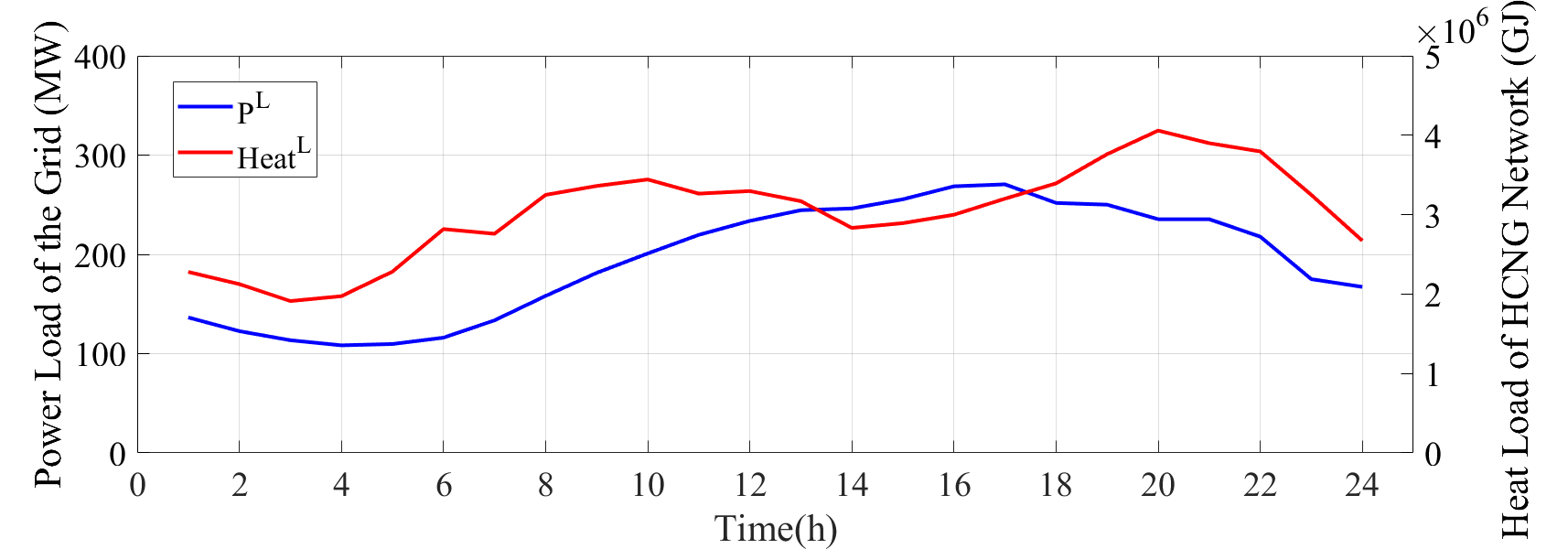}
	\caption{Profiles of electricity load and HCNG load converted to heat values.}
	\vspace{-12pt}
	\label{fig_load2}
\end{figure}

%\vspace{5pt}
\subsubsection{Result of the proposed method}
The scheduling results of the E-HCNG network by the proposed method are shown in Fig. \ref{fig_test1}.
The total operating cost is $2.156*10^5\$$, with the optimal gap equal to 0. The computation time is 1238 seconds.

%\begin{figure}[hbtp]
%	\centering
%	\subfloat[Operations and power balance of the grid.]{\includegraphics[width=3.2in]{fig_test1_1_power}\label{fig:subfig1}}
%	\newline  % 在子图之间添加换行符
%	\vspace{-2pt}
%	\subfloat[HCNG supply and HVF value in the HCNG network.]{\includegraphics[width=3.2in]{fig_test1_2_gas_volume}\label{fig:subfig2}}
%	\newline  % 在子图之间添加换行符
%	\vspace{-2pt}
%	\subfloat[Operations and stored volume of hydrogen storage.]{\includegraphics[width=3.2in]{fig_test1_3_hstor}\label{fig:subfig3}}
%	\caption{Operations of the proposed method in the small-scale E-HCNG network.}
%	\vspace{-15pt}
%	\label{fig_test1}
%\end{figure}

%\begin{figure*}[!t]
%	\centering
%	\includegraphics[width=7.5in]{fig_row}
%	\caption{Profiles of electricity load and HCNG load converted to heat values.}
%	\vspace{-12pt}
%	\label{fig_load2}
%\end{figure*}

Fig. \ref{fig_test1}(a) illustrates the power balance of the power system.
The blue bars represent the PV output, which concentrates during daytime hours especially at noon.
The red bar is the non-GFUs output, which compensates for insufficient PV output.
For the P2H power, two points are worth mentioning: 
(1) During $t=9-14h$, excess power from the PV, exceeding the grid's load, is utilized by the P2H.
(2) During $t=7-8h$ and $t=15-18h$, the P2H still runs when the PV output exists but is less than the load.
This is due to P2H electrolyzers having higher flexibility compared to non-GFUs and GFUs. Thus, P2H runs at low power to provide reserves against PV output fluctuations.

Fig. \ref{fig_test1}(b) illustrates the gas volume and HVF in the HCNG network.
The red bars indicate the hydrogen volume released from storage, while the red line represents the HVF of the HCNG.
Notably, during $t=11-12h$, a significant hydrogen release corresponds with a gradual increase in HVF.
Conversely, the HVF gradually decreases during $t=17-18h$ when hydrogen release diminishes.
Because of the gas storage effect of line packs, the change of HVF is buffered with respect to the hydrogen release.
The gas storage capacity of line packs buffers the change of HVF relative to the hydrogen release.

Fig. \ref{fig_test1}(c) illustrates the operations of hydrogen storage. 
The blue bars represent the volume of hydrogen produced by P2H.
The red bars indicate the hydrogen released from storage.
The red line shows the volume of hydrogen in storage.
During $t=1-6h$, 
hydrogen storage is gradually depleted to maximize capacity for daytime P2H production.
Then, the storage rapidly increases during $t=9-10h$ as P2H production increases. 
The storage remains relatively stable during $t=11-18h$ and gradually diminishes during $t=19-24h$, optimizing the use of storage capacity.

\begin{figure}[!t]
	\centering
	\includegraphics[width=3.4in]{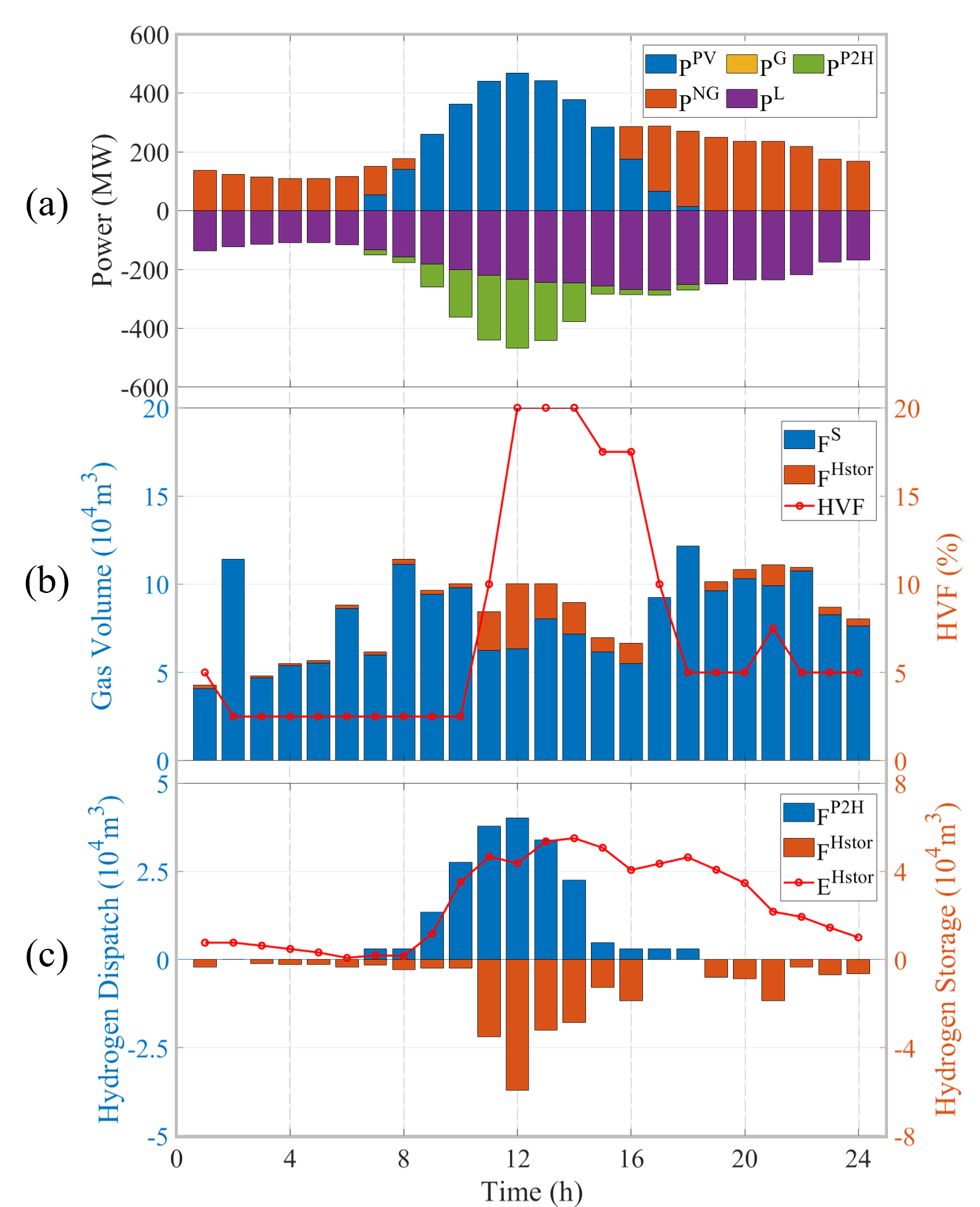}
	\caption{Operations of the small-scale E-HCNG network: (a) Operations and power balance of the grid. (b) HCNG supply and HVF value in the HCNG network. (c) Operations and stored volume of hydrogen storage. }
	\vspace{-12pt}
	\label{fig_test1}
\end{figure}
%
%\begin{figure}[hbtp]
%	\centering
%	\begin{subfigure}[b]{\linewidth}
%		\centering
%		\includegraphics[width=\linewidth]{fig_test1_1_power}
%		\caption{Subfigure 1 caption}
%		\label{fig:subfig1}
%	\end{subfigure}
%	\vfill
%	\begin{subfigure}[b]{\linewidth}
%		\centering
%		\includegraphics[width=\linewidth]{fig_test1_2_gas_volume}
%		\caption{Subfigure 2 caption}
%		\label{fig:subfig2}
%	\end{subfigure}
%	\vfill
%	\begin{subfigure}[b]{\linewidth}
%		\centering
%		\includegraphics[width=\linewidth]{fig_test1_3_hstor}
%		\caption{Subfigure 3 caption}
%		\label{fig:subfig3}
%	\end{subfigure}
%	\caption{Overall figure caption}
%	\label{fig_test1}
%\end{figure}

\subsubsection{Comparison with Fixed HVF Operations}
To illustrate the benefits of variable HVF operations, comparative experiments with fixed HVFs are conducted.
The fixed HVFs are set from 0\% to 20\% in steps of 2\%.
The operating costs at each fixed HVF are shown in Fig. \ref{fig_test2_HVF}.

\begin{figure}[!t]
	\centering
	\includegraphics[width=3.5in]{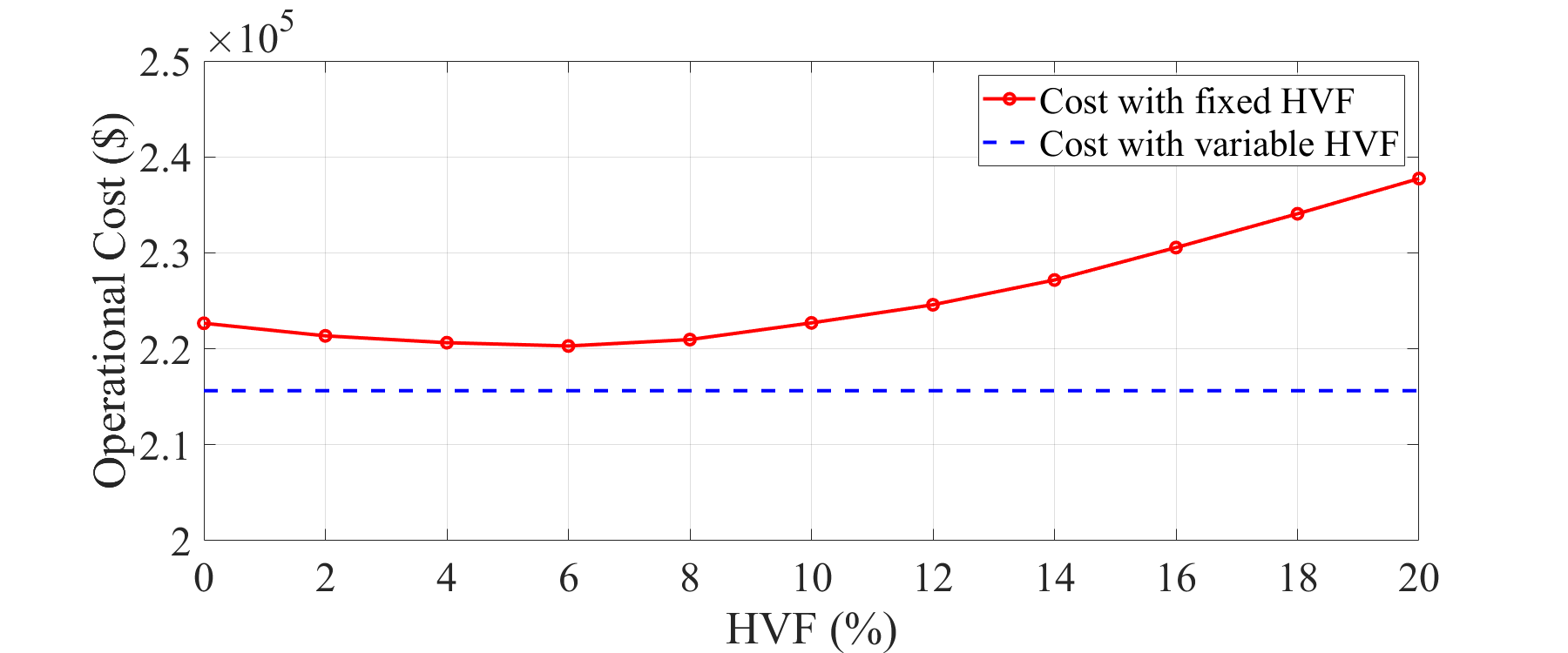}
	\caption{Operating costs under different HVF models.}
	\vspace{-15pt}
	\label{fig_test2_HVF}
\end{figure}
Among fixed HVF settings, the operating cost decreases and then increases as the HVF rises.
Under low HVFs, P2H and hydrogen blending effectively utilize the surplus PV power during peak hours.
However, when HVF exceeds $8\%$, the PV output fails to meet the P2H demands, necessitating the use of non-GFUs to supplement the power. This results in significantly higher costs.
The result demonstrates that HVF is crucial to operating costs.

Furthermore, by comparing variable and fixed HVF settings, the operating cost of the variable HVF is significantly lower.
Note that the 
This reflects the flexibility benefits of the variable HVF operation paradigm.

%On the other hand, the operational cost with variable HVF is smaller than any fixed HVF costs.
%This reflects the flexibility advantage of variable HVF operations.
\begin{table}[htbp]
	\centering
	\caption{Operating costs and EJVPs of different affine policies}
	\label{tab_test3_compare}
	\begin{tabular}{c c c c c c}
		\toprule
		& \textit{P1} & \textit{P2} & \textit{P3} & \textit{P4} & \textit{P\textsuperscript{P}}\\
		\midrule
		Operating Cost ($10^5$\$) & 2.07 & 2.29 & infeasible & 2.17 & 2.16 \\ 
		EJVPs(\%) & 49.33 & 0.00 & N/A & 0.01 & 0.01\\ 
		\bottomrule
	\end{tabular}
\end{table}
\subsubsection{Comparison with Other Dispatch Policy}
To demonstrate the benefits of the proposed E-HCNG affine policy,
comparative experiments are conducted based on the following scheduling modes:

\textit{P1}: Deterministic scheduling policy, assuming PV outputs are deterministic and known in advance.

\textit{P2}: An affine policy with non-GFUs only.

\textit{P3}: An affine policy with P2H only.

\textit{P4}: An affine policy with P2H only, expanding hydrogen storage capacity to 1.5 times its original size.

\textit{P\textsuperscript{P}}: The proposed system-level E-HCNG affine policy.

%Note that in Policy 3-5, only the specified components rather than the whole system follow the PV output fluctuations and affinely adjust. 

%Furthermore, to illustrate the reliability of the scheduling strategy, the following sample-based examination is employed: 
%【说明白reliability的定义】

Furthermore, to quantify the reliability of the scheduling strategy, the empirical joint violation probability (EJVP) is introduced. 
The EJVP is defined as the ratio of samples in which any constraint is violated \cite{lun2022YangLun}.
To generate samples about PV output predict error $\xi_t$, a Monte Carlo simulation is conducted using 10,000 samples via Gaussian distribution, with the maximum statistic error based on $\gamma_1$ and $\gamma_2$.

The operating cost and reliability results for different scheduling modes are shown in Table \ref{tab_test3_compare}.
The deterministic policy (\textit{P1}) exhibits the lowest operating cost, since it does not account for the reserve capacity costs. 
Consequently, this approach also results in the lowest reliability.

The affine policies \textit{P2}-\textit{P\textsuperscript{P}} guarantee reliability, with the proposed policy Policy \textit{P\textsuperscript{P}} having the lowest operating cost among them.
Policy \textit{P2} results in higher operating costs due to the higher reserve cost of non-GFUs.
Policy \textit{P3} is infeasible, since the affine policy relies solely on P2H and without storage release may exceed the storage capacity during adjustment.
To further evaluate the affine policy, we expand the hydrogen storage capacity, thus creating Policy \textit{P4}.
Although feasible, Policy \textit{P4} still has a higher operating cost compared to Policy \textit{P\textsuperscript{P}}.
Considering the substantial construction costs of hydrogen storage, these results show the superiority and necessity of the proposed E-HCNG affine policy.

\subsubsection{Comparison with other DRCC Methods}
As discussed in Section III-A, the proposed DRJCC method maintains the operational violation risk below the setting level, while avoiding the over-conservation caused by uncertainty characterization and the Bonferroni approximation.
To illustrate this, experiments are carried out based on the following methods:

\textit{M1}: DRJCC transformed by averaged Bonferroni approximation with $\epsilon_i=\frac{\epsilon}{N}$.

\textit{M2}: DRJCC transfromed by outer approximation, i.e., $\epsilon_i=\epsilon$.

\textit{M3}: DRJCC transformed by the proposed method but approximately solved using the normal quantile.

\textit{M4}: DRJCC transformed by the proposed method but built on a general second-order moment-based ambiguity set.

\textit{M\textsuperscript{P}}: DRJCC transformed and solved by the proposed method.

Similar to the previous part, 100,000 samples generated from the Gaussian distribution are used to check the reliability.
These samples are divided into ten groups, each containing 10,000 samples.
The maximum, minimum, and average EVJPs among groups are recorded.
The operating cost and reliability results are shown in Table \ref{tab_test4}.
% on testing reliability based on samples, but the samples were divided into groups. Reliability ratios were counted on a group scale. The maximum, minimum, and average values of reliability are recorded. XXXXXXXXXXXXXXXXXX %补充说明生成了多少样本，被分为了多少组。

The comparison of \textit{M\textsuperscript{P}} with \textit{M1} and \textit{M2} validates the superiority of the proposed method of converting DRJCC into individual DRCCs.
The averaged Bonferroni approximation method (\textit{M1}) has a significantly higher operating cost due to the over-conservation.
Although the outer approximation method (\textit{M2}) has a similar operating cost to the proposed method, this relaxation method cannot theoretically guarantee the operational risk level \cite{weijun2022LiLun}.

The comparison of \textit{M\textsuperscript{P}} with \textit{M3} and \textit{M4} validates the superiority of the unimodality-skewness informed DRCC.
In the stochastic programming method via quantile (\textit{M3}), the operational decisions fail to meet the setting risk level.
Additionally, compared with the general second-order moment-based DRCC (\textit{M4}), the proposed method results in a lower operating cost. This advantage is due to the additional information about uncertainties, which allows for more precise risk control.
These conclusions are further verified in the sensitivity experiments in the next part.

\begin{table}[htbp]
	\centering
	\caption{Operating costs and EJVPs of different chance-constrained methods}
	\label{tab_test4}
	\begin{tabular}{c c c c c c c}
		\toprule
		& \textit{M1} & \textit{M2} & \textit{M3} & \textit{M4} & \textit{M\textsuperscript{P}} \\
		\midrule
		Operating Cost ($10^5$\$)  & 2.50 & 2.14 & 2.10 & 2.19 & 2.16\\ 
		Maximum EVJPs(\%)  & 0.00 & 0.41 & 7.37 & 0.00 & 0.01\\ 
		Minimun EVJPs(\%) & 0.00 & 0.45 & 7.26 & 0.00 & 0.00 \\ 
		Average EVJPs(\%) & 0.00 & 0.43 & 7.33 & 0.00 & 0.01 \\ 
		\bottomrule
	\end{tabular}
\end{table}

\subsubsection{Sensitivity of DRJCC methods to risk levels}
To further validate the effectiveness of the proposed method under different risk levels, sensitivity experiments are conducted.
The operating costs and the average EJVPs are shown in Fig. \ref{fig_test5} and Table \ref{tab_test5}, respectively.

In Table \ref{tab_test5}, it is evident that the EJVPs for \textit{M\textsuperscript{P}} are below the preset risk levels under all conditions, validating the effectiveness of the proposed method.
For \textit{M1}, the EJVPs are consistently 0.00\% across all risk levels, which illustrates the over-conservation of Bonferroni approximation for E-HCNG networks.
The over-conservation results in significantly higher operating costs, as shown in Fig. \ref{fig_test5}.
Furthermore, \textit{M2} and \textit{M3} exhibit substantially higher EJVPs.
Note that both these two methods cannot guarantee the operational risk level, although the EJVP for \textit{M2} is below the setting risk. 
Lastly, the EJVPs for \textit{M4} are also below the setting risk levels, but Fig. \ref{fig_test5} reveals differences in operating costs.

In Fig. \ref{fig_test5}, it is evident that larger risk levels $\epsilon$ lead to lower operating costs across all methods, particularly noticeable when $\epsilon$ is small.
Specifically, the proposed method (\textit{M\textsuperscript{P}}) exhibits the lowest operating cost among methods capable of guaranteeing operational risk levels (\textit{M\textsuperscript{P}}, \textit{M1}, and \textit{M4}, as previously analyzed).
Moreover, the cost difference between the proposed method (\textit{M\textsuperscript{P}}) and the outer approximation (\textit{M2}) is minimal.
This minimal cost variation is attributed to the decreased effectiveness of the DRJCC constraints as $\epsilon_i$ increases.
%This minimal cost variation is because as $\epsilon_i$ increases, the effectiveness of the DRJCC constraints decreases, resulting in smaller variations in costs.
Most importantly, the proposed method can safeguard the operational risk, while method \textit{M2} and \textit{M3} cannot, thereby emphasizing the advantages of the proposed method.
%Additionally, for the outer approximation (M3), the differences in operating costs are minimal.
%This is because as the $\epsilon$ increases, the constraint effectiveness of the DRJCC decreases, which make the difference of decision-making and costs becomes smaller.
%
%
%Under different risks level $\epsilon$, The proposed method (Method 3) has the lowest economic cost while meeting the set risk levels.
%This again proves the superiority of the proposed methodology.
%%It can be seen that the conclusions obtained from different DRJCC processing methods are stable among various $\epsilon$.
%Also, it can be seen that a lower risk level (larger $\epsilon$), for all methods, leads to lower operating costs, which is especially significant when $\epsilon$ is small.
%This is because as the $\epsilon$ increases, the constraint effectiveness of the DRJCC decreases, which make the difference of decision-making and costs becomes smaller.
\begin{table}[!t]
%\begin{table}[htbp]
	\centering
	\caption{EJVPs under various risk level $\epsilon$}
	\label{tab_test5}
	\begin{tabular}{c c c c c c c}
		\toprule
		$\epsilon$& 0.05 & 0.10 & 0.15 & 0.20 & 0.25 & 0.30 \\
		\midrule
		\textit{M1} & 0.00 & 0.00 & 0.00 & 0.00 & 0.00 & 0.00 \\ 
		\textit{M2} & 0.43 & 4.34 & 9.91 & 15.34 & 20.21 & 24.43\\ 
		\textit{M3} & 7.33 & 13.31 & 25.66 & 24.27 & 29.44 & 34.55\\
		\textit{M4} & 0.00 & 0.00 & 0.04 & 0.27 & 0.67 & 1.37 \\ 
		\textit{M\textsuperscript{P}} & 0.01 & 0.42 & 1.96 & 4.35 & 7.10 & 9.99 \\ 
		\bottomrule
	\end{tabular}
\end{table}
\begin{figure}[!t]
	\centering
	\includegraphics[width=3.4in]{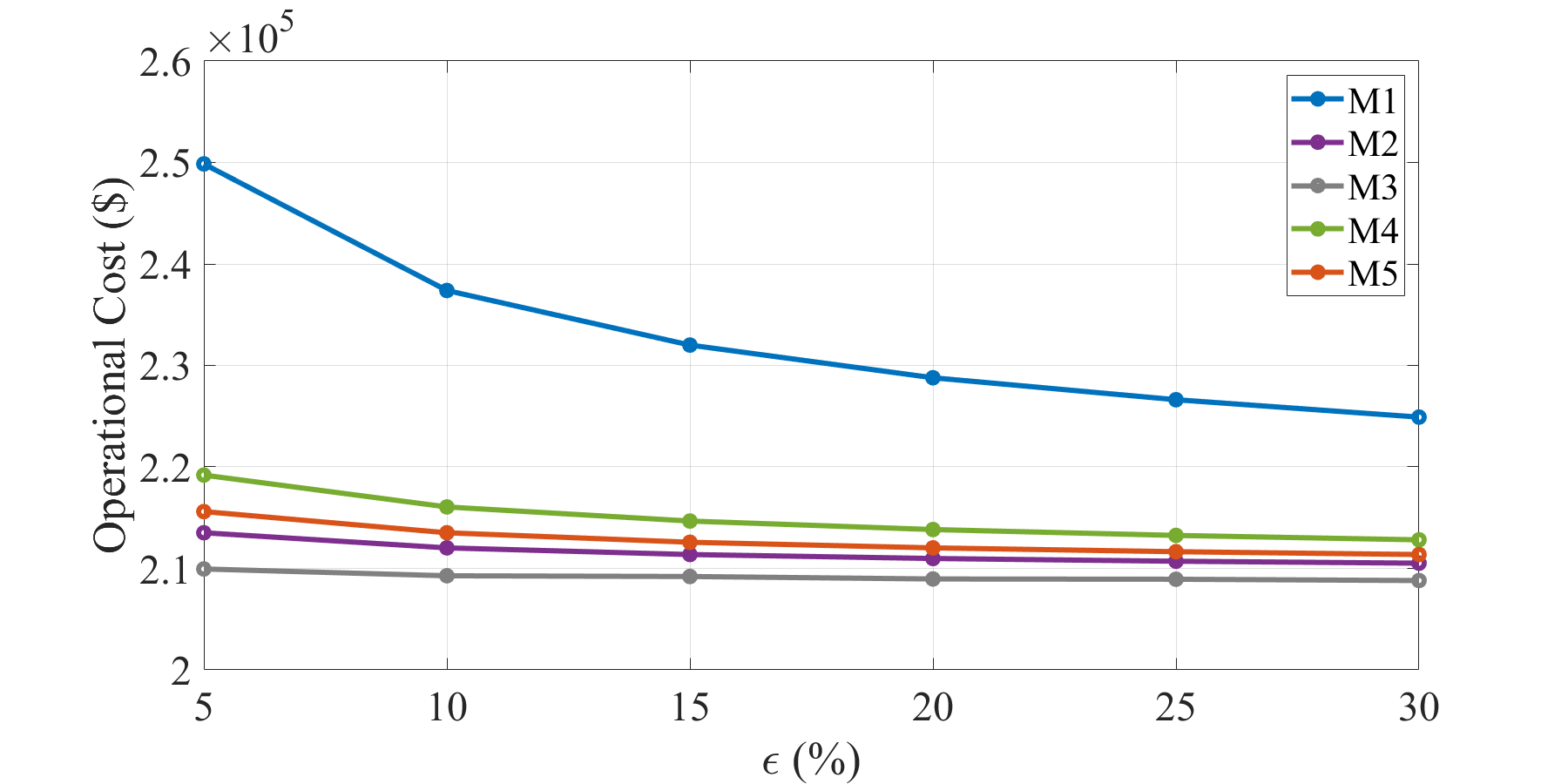}
	\caption{Operating cost under various risk level $\epsilon$.}
	\vspace{-13pt}
	\label{fig_test5}
\end{figure}
\vspace{-10pt}
\subsection{Big-Scale Test System}
To demonstrate the performance and efficiency of the proposed method on larger-scale systems, experiments are conducted on a system comprising a 118-bus power grid and a 20-node HCNG network.
The system includes 42 non-GFUs, 4 GFUs, 12 PV stations, 91 electricity loads, two gas sources, two compressors, nine gas loads, and 4 P2H stations with hydrogen storage.
The topology and parameters of the large-scale system are detailed in \cite{case_data}.

%The operating cost of the entire system is $1.57*10^6$\$, with a computation time of 13691 seconds.
%The scheduling results are illustrated in Fig \ref{fig_Big_test1}.
%The scheduling strategy is similar with the small-scale system.
%During the midday hours, P2Hs work converts excess PV output to hydrogen.
%Also P2Hs provide the reserve capacity against PV fluctuation.
%For the HCNG network, HVF rises and then falls throughout the day.
%Due to the storage of line packs, there is no external gas supply in a few hours.
%Also, the capacity of hydrogen storage is fully utilized.

The operating cost of the entire system is $1.57*10^6$\$, with a computation time of 13691 seconds.
The detailed operations are illustrated in the Fig. 7,
where the scheduling strategy is similar to the small-scale system.

\begin{figure}[!t]
	\centering
	\includegraphics[width=3.4in]{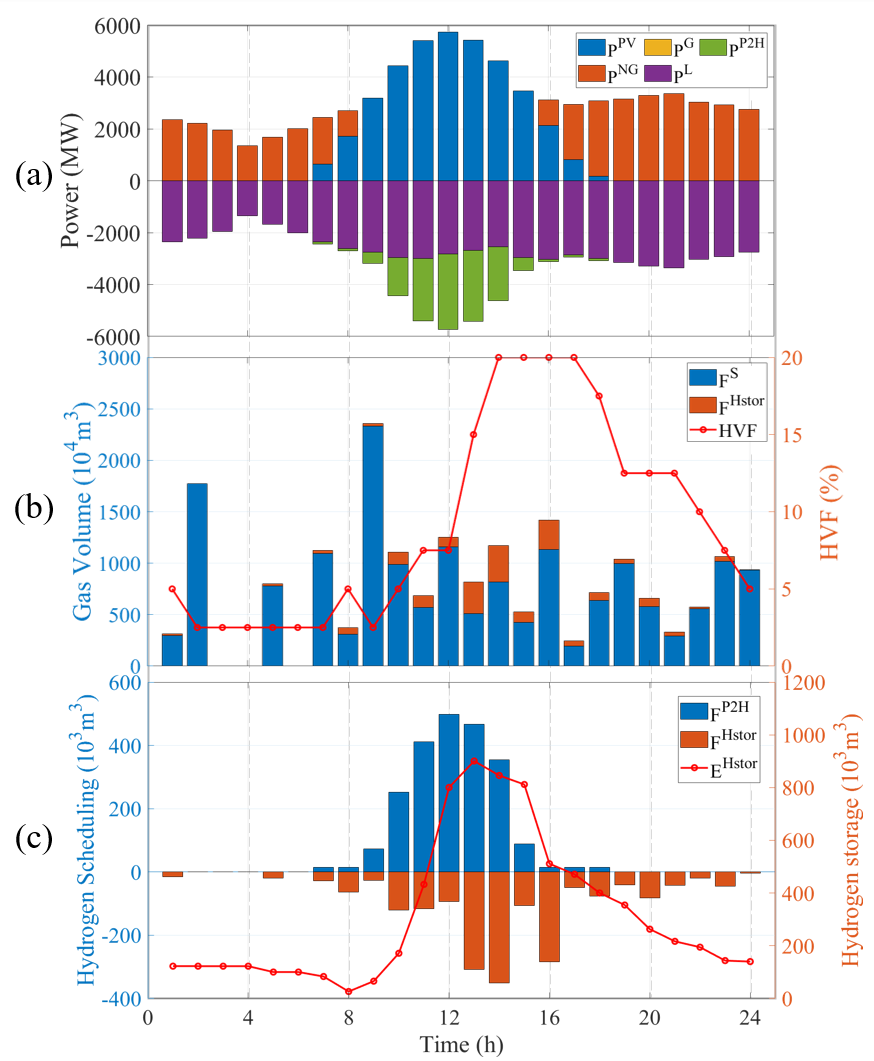}
	\caption{Operations of the large-scale E-HCNG network: (a) Operations and power balance of the grid. (b) HCNG supply and HVF value in the HCNG network. (c) Operations and stored volume of hydrogen storage. }
	\vspace{-12pt}
	\label{fig_test_big}
\end{figure}

\begin{table*}[htbp]
	\centering
	\caption{Operating costs and EJVPs of different operation approach in the large-scale system}
	\label{tab_test_big}
	\begin{tabular}{c c c c c c c c c c}
		\toprule
		& \textit{P0} & \textit{P1} & \textit{P2} & \textit{P3} & \textit{M1} & \textit{M2} & \textit{M3} & \textit{M4}& \textit{P\textsuperscript{P}}/\textit{M\textsuperscript{P}}\\
		\midrule
		Operating Cost ($10^6$\$)  & 1.69 & 1.52 & 1.64 & 1.65 & infeasible & 1.56 & 1.54 & 1.59  & 1.57\\ 
		EJVPs(\%) & 0.01 &  49.33 & 0.01 & 0.00 &  N/A  & 0.27 & 10.85 &  0.00 & 0.01 \\ 
		\bottomrule
	\end{tabular}
\end{table*}

Comparative experiments are also conducted on large-scale systems.
Additionally, let \textit{P0} denote the operational strategy based on fixed HVF with an optimal value, which can be obtained by experiments as described in Section IV-A.2).
The results are shown in Table \ref{tab_test_big}.

Comparison of \textit{M1} and \textit{P0} shows the significant advantage of variable HVFs in terms of operating costs.
Although the deterministic strategy (\textit{P1}) has a lower cost due to neglecting reserves, the high EJVP renders its reliability unacceptable.
For the appointed affine strategies \textit{P2} and \textit{P3}, they are less flexible compared to the system-level affine policy (\textit{P1}), resulting in higher operating costs.

%\vspace{-10pt}
The comparison between \textit{M\textsuperscript{P}} and \textit{M1}, \textit{M2} reconfirms the superiority of the proposed DRJCC processing method. 
%Since the average Bonferroni approximation (M2) approach is over-conservative when facing numerous conditions, the system can not provide the reserve capacities required by the setting risk level, which makes the problem unfeasible.
The average Bonferroni approximation for DRJCC (\textit{M1}) is over-conservative under numerous conditions, making the problem infeasible as the system cannot provide the reserve capacities that \textit{M1} requires.
Additionally, the outer approximation for DRJCC (\textit{M2}) cannot provide theoretical guarantees for the operational risk level.
Further comparisons between \textit{M\textsuperscript{P}} and \textit{M3}, \textit{M4} reconfirm the superiority of the proposed DRCC model.
Under the same joint chance constraint processing method, the distribution quartile-based method (\textit{M3}) fails to meet the operational risk level, underscoring the necessity of DRCC for reliable operations.
Moreover, the general second-order moment-based DRCC (\textit{M4}) has higher operating costs while guaranteeing the same risk level. This reflects the benefits of an unimodality-skewness informed ambiguity set, which avoids unnecessary over-conservation.

These experiments collectively demonstrate the superiority of the proposed methodology, which achieves more economical performance while safeguarding operational risk.

%This reflects the superiority of unimodality-skewness informed ambiguity set, which can avoid the unnecessary over-conservatism.

%\begin{table*}[htbp]
%	\centering
%	\caption{Case Setup}
%	\label{tab_test3}
%	\begin{tabular}{c c c c c c c c c c}
%		\toprule
%		& M1 & \textit{P0} & P1 & P3 & P4 & M2 & M3 & M4 & M5\\
%		\midrule
%		Operating Cost ($10^6$\$) & 1.57 & 1.69 & 1.52 & 1.64 & 1.65 & infeasible & 1.56 & ------ & 1.55  \\ 
%		EJVPs(\%) & 0.01 & 0.01 &  49.33 & 0.01 & 0.00 &  N/A  & 0.27 & &  0.00\\ 
%		\bottomrule
%	\end{tabular}
%\end{table*}

\section{Conculsion}
Faced with the challenge of renewable energy fluctuation in E-HCNG network operations, this paper proposes a flexible and reliable scheduling method.
Flexibility is achieved through the HVF and a system-level re-dispatch affine policy.
Reliability is ensured by restricting the re-dispatch volume under the renewable energy uncertainty by a DRJCC.
% addressing renewable energy uncertainty with a DRJCC.
In the solving stage, first, to reduce the over-conservation in the DRJCC transformation, an improved risk allocation method is proposed, utilizing the correlations of violations under the affine policy. 
The proposed method can improve the operating economy, especially when DRJCC involves numerous conditions like E-HCNG networks, while ensuring the violation risk is within the setting level.
%To solve the operation problem, an improved risk allocation method for DRJCC utilizing the characteristics of affine policies is proposed. 
%
%This method ensures the risk level while significantly reducing over-conservation, especially when DRJCC involves numerous conditions.
%Moreover, customized transformations address the non-convexities introduced by variable HVF and ultimately transform the problem into mixed-integer second-order cone programming.
Moreover, to tackle the non-convexities introduced by variable HVF, a series of transform methods are proposed, and ultimately transform the problem into mixed-integer second-order cone programming.
%customized transformations address the non-convexities introduced by variable HVF and ultimately transform the problem into mixed-integer second-order cone programming.

Both small- and large-scale experiments validate the effectiveness of the proposed method.
The variable HVF and system-level affine policy effectively reduce operating costs and enhance reliability.
Furthermore, among several joint chance-constraint methods, the proposed method achieves the lowest operating cost while guaranteeing the set risk level.
Notably, the proposed risk allocation method is also applicable to other DRJCC problems with affine policies, potentially contributing further to other fields.
\vspace{-8pt}

\bibliographystyle{IEEEtran}
\bibliography{Paper2}

%\section{References Section}
%You can use a bibliography generated by BibTeX as a .bbl file.
% BibTeX documentation can be easily obtained at:
% http://mirror.ctan.org/biblio/bibtex/contrib/doc/
% The IEEEtran BibTeX style support page is:
% http://www.michaelshell.org/tex/ieeetran/bibtex/
% 
% % argument is your BibTeX string definitions and bibliography database(s)
%%\bibliography{IEEEabrv,../bib/paper}
%%
%\section{Simple References}
%You can manually copy in the resultant .bbl file and set second argument of $\backslash${\tt{begin}} to the number of references
% (used to reserve space for the reference number labels box).

\vfill

\end{document}